\pgfplotsset{compat=1.5}
\newtheorem{theorem}{Theorem}
\newtheorem{lemma}{Lemma}
\newcommand{\edit}[1]{{\color{black}#1}}
\title{\edit{Joint Scheduling and Coding For Low In-Order Delivery Delay Over Lossy Paths With Delayed Feedback}}
\author{Pablo Garrido$^\dag$, Douglas J. Leith$^\$$, Ram\'on Ag\"uero$^\dag$\\[6pt]
	\begin{minipage}{0.5\textwidth}
		\begin{center}
			$^\dag$ Dept. of Communications Engineering \\
			University of Cantabria, Santander 39005, Spain
			\texttt{\small \{pgarrido,ramon\}@tlmat.unican.es}
		\end{center}
	\end{minipage}
	\begin{minipage}{0.5\textwidth}
		\begin{center}
			$^\$$ School of Computer Science and Statistics\\
			Trinity College Dublin, Dublin 2, Ireland \\
			\texttt{\small doug.leith@tcd.ie}
		\end{center}
	\end{minipage}
}
\tikzset{%
	highlight/.style={rectangle,rounded corners,fill=red!15,draw,fill opacity=0.5,thick,inner sep=0pt}
}
\begin{document}
	\maketitle
	
	\begin{abstract}

We consider the transmission of packets across a lossy end-to-end network path so as to achieve low in-order delivery delay.    This can be formulated as a decision problem, namely deciding whether the next packet to send should be an information packet or a coded packet.  Importantly, this decision is made based on \emph{delayed} feedback from the receiver. While an exact solution to this decision problem is challenging, we exploit ideas from queueing theory to derive scheduling policies based on prediction of a receiver queue length that, while suboptimal, can be efficiently implemented and offer substantially better performance than state of the art approaches. We obtain a number of useful analytic bounds that help characterise design trade-offs and our analysis highlights that the use of prediction plays a key role in achieving good performance in the presence of significant feedback delay.  Our approach readily generalises to networks of paths and we illustrate this by application to multipath transport scheduler design.

\end{abstract}


\section{Introduction}

In this paper we revisit the transmission of packets across a lossy end-to-end network path so as to achieve low in-order delivery delay.   Consideration of end-to-end packet transmission is motivated by improving operation at the transport layer and with this in mind we also assume the availability of feedback from client to server.   This feedback is delayed by the path propagation delay and, in contrast to the link layer, this feedback delay may be substantial.  For example, on a 50Mbps path with 25ms RTT there are around 100 packets in flight and so the server only learns of the fate of a packet after a further 100 packets have been sent. In other words, the server has to make \emph{predictive} decisions about what to transmit in those 100 packets, in particular whether they are information or redundant/coded packets.
Information theory tells us that we do not need to make use of feedback in order to be capacity achieving in a packet erasure channel.  However, it also tells us that feedback can be used to reduce in-order delivery delay, possibly very considerably~\cite{Karzan2017}.  More generally, there is a trade-off between rate and delay, and feedback can be used to modify this trade-off, and it is this which is of interest.

While much attention in 5G has been focused on the physical and link layers, it is increasingly being realised that a wider redesign of network protocols is also needed in order to meet 5G requirements.   Transport protocols are of particular relevance for end-to-end performance, including end-to-end latency.   For example, ETSI have recently set up a working group to study next generation protocols for 5G \cite{ETSI2016}.  The requirement for major upgrades to current transport protocols is also reflected in initiatives such as Google QUIC \cite{Langley2017} and the Open Fast Path Alliance \cite{OpenFast2016} as well as by recent work such as \cite{kim14}.   In part, this reflects the fact that low delay is already coming to the fore in network services.  For example, Amazon estimates that a 100ms increase in delay reduces its revenue by 1\% \cite{Amazon2013}, Google measured a 0.74\% drop in web searches when delay was artificially increased by 400ms \cite{Google2011} while Bing saw a 1.2\% reduction in per-user revenue when the service delay was increased by 500ms \cite{Bing2009}.   But the requirement for low latency also reflects the needs of next generation applications, such as augmented reality and the tactile Internet.

As we will describe in more detail shortly, by use of modern low-delay streaming code constructions, the task at the transport layer can be formulated as one of deciding whether the next packet to send should be an information packet or a coded packet, with this decision being made based on stale/delayed feedback from the receiver.    The use of feedback in ARQ has of course been well studied, but primarily in the case of instantaneous feedback i.e. where there is no delay in the server receiving the feedback.   When feedback is delayed the problem becomes significantly more challenging, and has received almost no attention in the literature (notable exceptions include \cite{Vasudevan2010,Sahai2008,Leith2016}).   While the decision task can be formulated as a dynamic programming problem, the complexity grows combinatorially with the delay\footnote{\edit{In the presence of feedback delay $d$ the state space of the dynamic programme corresponds to the possible outcomes of the $d$ packets in flight (for which no feedback is yet available), the number of which grows combinatorially with $d$.}} and so quickly becomes unmanageable for even quite small delays.  In particular, such solutions are unsuited to the real-time decision-making required within next generation networks.   

\edit{In this paper we take a different approach and make use of a helpful connection between coding and queuing theory.  We use this connection to derive scheduling policies based on the prediction of the receiver queue length that, while suboptimal, can be efficiently implemented and offer substantially better performance than state of the art solutions. }  This approach also allows us to obtain a number of useful analytic bounds that help characterise design trade-offs.   Our analysis highlights that the use of prediction plays a key role in achieving good performance in the presence of significant feedback delay, and that it is prediction errors that drive the rate-delay trade-off. \edit{To the best of our knowledge this work is the first to make use of prediction with delayed feedback.} Although our main focus is on single paths, our approach readily generalises to networks of paths and we illustrate this by application to multipath transport scheduler design.

	\section{Related Work}
\label{sec:related_work}

The literature contains several different proposals for coding schemes that make use of feedback. For instance, Sundarajan \emph{et al.} introduce in~\cite{Sundararajan2017} a new linear coding scheme that includes feedback. They exploit it so that the encoder learns the packets that have been ``seen'' by the receivers, thus speeding the decoding process. \edit{A similar approach, considering wireless multicast communications, is described in \cite{Wu2015}, which proposes a joint coding/feedback scheme, scalable with respect to the number of receivers.} The authors of~\cite{Sorensen2012} propose an extension of LT and Raptor Codes that adds information feedback, with the objective of reducing the coding overhead. Hagedorn \emph{et al.} present in~\cite{Hagedorn2009} a generalized LT coding scheme that relies on feedback information. Other interesting approaches include Hybrid ARQ~\cite{Rowitch2000}, which combines a forward error correction scheme with automatic repeat-request.  A recent work that promotes the use of Hybrid ARQ for low latency and ultra reliable applications is, for example, that from Cabrera~\emph{et al}~\cite{Cabrera2017}. 

However, most of the existing literature does not consider the impact of feedback delay.  Under circumstance with no delayed feedback, it is well known that ARQ is optimal both in terms of capacity and delay~\cite{Vasudevan2010}. However, when feedback is delayed the situation changes fundamentally, and the end-to-end delay with ARQ can greatly increase. The use of coding schemes can reduce this end-to-end delay, even when the feedback delay is not small~\cite{Vasudevan2010}. The importance of considering feedback is also considered by~\cite{Sahai2008}, where the authors studied how the performance of block-coding varies with and without feedback, especially when considering the impact of delayed feedback. 
 
The analysis of coding schemes with delayed feedback remains largely open. In~\cite{Leith2016} the authors study the throughput and end-to-end delay of a variable-length block coding scheme, focusing on regimes where the feedback delay was shorter than the minimum block size. In addition, the authors focus on saturated network conditions, where the sender has an unlimited number of packets waiting to be sent.

	\section{Preliminaries}



\subsection{Low Delay Streaming Codes}
\label{sec: RandomLinearConstruction}

We model an end-to-end network path as a packet erasure channel (packets carry a unique sequence number and a checksum thus losses can be detected). Most previous works on packet erasure channels have been based on use of block codes, whereby the sequence of information packets to be transmitted is partitioned into blocks of size $k$ and $n-k$ coded packets are appended to these to create a block of size $n$ information plus coded packets, which implies a code with rate $k/n$, see Fig. \ref{Fig:introa}.   
As already noted, the requirement for low latency in next generation networks has led to renewed interest in whether alternative code constructions can yield a more favourable trade-off between throughput and in-order delivery delay.  To see that this may indeed be the case let us consider, for example, a rate $\frac{k}{n}$ systematic block code and suppose that the code is an ideal one in the sense that receipt of any $k$ of the $n$ packets allows all of the $k$ information packets to be reconstructed.  Furthermore, assume that the first information packet is lost. All remaining information packets have to be buffered until the first coded packet is received. At this point, the first information packet can be reconstructed and all of the information packets can be delivered in-order.   The in-order delivery delay is therefore proportional to $k$.  Alternatively, suppose that the $n-k$ coded packets are distributed uniformly among the information packets, rather than all being placed after the $k$ information packets, see Fig. \ref{Fig:introb}.  To keep the code causal, suppose that each coded packet only protects the preceding information packets in the block\footnote{Thus, coded packet $c_1$ protects information packets $u_1$ and $u_2$, coded packet $c_2$ protects $u_1$, $u_2$, $u_3$ and $u_4$, and so on.  Note that the resulting code construction is \emph{not} the same as using a short classical block code with $k=2$ and $n=3$ as then $c_2$ would only protect $u_3$ and $u_4$.}.  Assume again that the first information packet is lost. This loss can now be recovered on receipt of the first coded packet resulting in a delay that is now proportional to $\frac{k}{n-k}$ (i.e, this is much lower than $k$ when $n$ is large).  

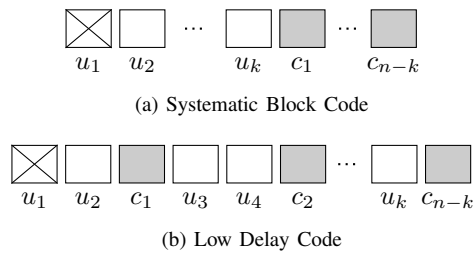
\begin{figure}
	\centering
	\subfloat[Systematic Block Code]{\label{Fig:introa}
		\begin{tikzpicture}

\node[draw,fill=white,minimum height=0.5cm,minimum width=0.6cm, align=center] (u1) at (0,0)  {};
\node[below = 0cm of u1] {$u_1$};
\draw (u1.north east) -- (u1.south west);
\draw (u1.north west) -- (u1.south east);
\node[draw,right = 0.1cm of u1,fill=white,minimum height=0.5cm,minimum width=0.6cm, align=center] (u2) {};
\node[below = 0cm of u2] {$u_2$};
\node[draw,right = 0.8cm of u2,fill=white,minimum height=0.5cm,minimum width=0.6cm, align=center] (uk) {};
\node[below = 0cm of uk] {$u_k$};
\node[draw,right = 0.1cm of uk,fill=black!20,minimum height=0.5cm,minimum width=0.6cm, align=center] (c1) {};
\node[below = 0cm of c1] {$c_1$};
\node[draw,right = 0.6cm of c1,fill=black!20,minimum height=0.5cm,minimum width=0.6cm, align=center] (cnk) {};
\node[below = 0cm of cnk] {$c_{n-k}$};

\draw[dotted,thick] ([xshift=0.26cm]u2.0) -- ([xshift=-0.26cm]uk.180);
\draw[dotted,thick] ([xshift=0.18cm]c1.0) -- ([xshift=-0.16cm]cnk.180);

\end{tikzpicture}
	}\\
	\subfloat[Low Delay Code]{\label{Fig:introb}
		\begin{tikzpicture}

\node[draw,fill=white,minimum height=0.5cm,minimum width=0.6cm, align=center] (u1) at (0,0)  {};
\node[below = 0cm of u1] {$u_1$};
\draw (u1.north east) -- (u1.south west);
\draw (u1.north west) -- (u1.south east);
\node[draw,right = 0.1cm of u1,fill=white,minimum height=0.5cm,minimum width=0.6cm, align=center] (u2) {};
\node[below = 0cm of u2] {$u_2$};
\node[draw,right = 0.1cm of u2,fill=black!20,minimum height=0.5cm,minimum width=0.6cm, align=center] (c1) {};
\node[below = 0cm of c1] {$c_1$};
\node[draw,right = 0.1cm of c1,fill=white,minimum height=0.5cm,minimum width=0.6cm, align=center] (u3) {};
\node[below = 0cm of u3] {$u_3$};
\node[draw,right = 0.1cm of u3,fill=white,minimum height=0.5cm,minimum width=0.6cm, align=center] (u4) {};
\node[below = 0cm of u4] {$u_4$};
\node[draw,right = 0.1cm of u4,fill=black!20,minimum height=0.5cm,minimum width=0.6cm, align=center] (c2) {};
\node[below = 0cm of c2] {$c_2$};

\node[draw,right = 0.6cm of c2,fill=white,minimum height=0.5cm,minimum width=0.6cm, align=center] (uk) {};
\node[below = 0cm of uk] {$u_k$};
\node[draw,right = 0.1cm of uk,fill=black!20,minimum height=0.5cm,minimum width=0.6cm, align=center] (cnk) {};
\node[below = 0cm of cnk] {$c_{n-k}$};

\draw[dotted,thick] ([xshift=0.16cm]c2.0) -- ([xshift=-0.16cm]uk.180);

\end{tikzpicture}
	}
	\caption{ Example of two codes with different throughput-delay characteristics. Shaded squares indicated coded packets, unshaded indicate information packets. }
	\vspace{-15pt}
	\label{fig:intro}
\end{figure}

With the aim of obtaining an improved trade-off between rate and delay, \cite{Karzan2017} recently proposed an alternative code construction for packet erasure channels, referred to as a streaming code (a form of convolutional code). The code is constructed by interleaving information packets $u_j$, $j=1,2,\ldots$ with coded packets $c_i$, $i=1,2,\ldots$.  One coded packet is inserted after every $l-1$ information packets and transmitted over the network path, resulting in a code of rate $\frac{l-1}{l}$. Fig. \ref{fig:scheme} illustrates this code construction. 
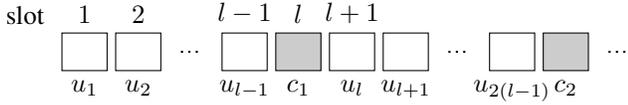
\begin{figure}
	\centering
	\begin{tikzpicture}

\node[draw,fill=white,minimum height=0.5cm,minimum width=0.6cm, align=center] (u1) at (0,0)  {};
\node[below = 0cm of u1] {$u_1$};
\node[above = 0cm of u1] (f_s) {$1$};
\node[left = 0.2cm of f_s] {slot};
\node[draw,right = 0.1cm of u1,fill=white,minimum height=0.5cm,minimum width=0.6cm, align=center] (u2) {};
\node[below = 0cm of u2] {$u_2$};
\node[above = 0cm of u2] {$2$};
\node[draw,right = 0.8cm of u2,fill=white,minimum height=0.5cm,minimum width=0.6cm, align=center] (ulm1) {};
\node[below = 0cm of ulm1] {$u_{l-1}$};
\node[above = 0cm of ulm1] {$l-1$};
\node[draw,right = 0.1cm of ulm1,fill=black!20,minimum height=0.5cm,minimum width=0.6cm, align=center] (c1) {};
\node[below = 0cm of c1] {$c_1$};
\node[above = 0cm of c1] {$l$};
\node[draw,right = 0.1cm of c1,fill=white,minimum height=0.5cm,minimum width=0.6cm, align=center] (ul) {};
\node[below = 0cm of ul] {$u_l$};
\node[above = 0cm of ul] {$l+1$};
\node[draw,right = 0.1cm of ul,fill=white,minimum height=0.5cm,minimum width=0.6cm, align=center] (ulp1) {};
\node[below = 0cm of ulp1] {$u_{l+1}$};
\node[draw,right = 0.8cm of ulp1,fill=white,minimum height=0.5cm,minimum width=0.6cm, align=center] (u2lm1) {};
\node[below = 0cm of u2lm1] {$u_{2\left(l-1\right)}$};
\node[draw,right = 0.1cm of u2lm1,fill=black!20,minimum height=0.5cm,minimum width=0.6cm, align=center] (c2) {};
\node[below = 0cm of c2] {$c_2$};

\draw[dotted,thick] ([xshift=0.26cm]u2.0) -- ([xshift=-0.26cm]ulm1.180);
\draw[dotted,thick] ([xshift=0.26cm]ulp1.0) -- ([xshift=-0.26cm]u2lm1.180);
\draw[dotted,thick] ([xshift=0.26cm]c2.0) -- ([xshift=0.5cm]c2.0);

\end{tikzpicture}
	\caption{Illustrating the low delay streaming code setup.  Sequence $\{u_j\}$ of information packets is interleaved with sequence $\{c_i\}$ of coded packets (indicated as shaded) and transmitted.  Slots correspond to a single packet transmission and are indexed $1,2,\ldots$. }
	\vspace{-10pt}
	\label{fig:scheme}
\end{figure}
 Coded packet $c_i$ can only recover an erasure of packets already transmitted and it is generated by taking random linear combinations of the previously transmitted information packets within the coding window $\{u_{L},\ldots,u_{(l-1)i}\}$, where $L$ represents the first packet protected, the coding window could be reduced by setting $L$ as the last packet acknowledged by the receiver. With the left-hand edge of the coding windows equals to 1 ($L=1$) a coded packet is generated by:

\begin{equation}
c_i = f_{ i}(u_1, u_2 , \dots ,u_{(l-1)i } ) := \sum_{j=1}^{(l-1)i} w_{ij} u_j\label{eq:code}
\end{equation}

\noindent where each information packet $u_j$ is treated as a vector in $\mathbb{F}_Q$ and each coefficient $w_{ij}\in \mathbb{F}_Q$ is chosen randomly from an i.i.d. uniform distribution, with $\mathbb{F}$ an appropriate choice of finite field, for instance $GF(2^8)$. 

Note that in practice the left-hand edge $L$ of the coding window can made be larger than $1$.   In particular, suppose that the receiver has received or decoded all information packets up to and including packet $u_j$.  Feedback can be used to communicate this to the transmitter allowing it to use $L={j+1}$ for all subsequent coded packets.   The generator matrix shown in Fig. \ref{fig:sliding-window} illustrates this sliding window approach, where the columns indicate the information packets that need to be sent and the rows indicate the composition of the packet transmitted at any given time.  

%

\begin{figure}
	\centering
	\begin{tikzpicture}[scale=0.9, transform shape]
\node(1){$
\left[\begin{array}{*8{c}}
1 &0 &0 &0 &0 &0 & 0 &0 \\
0 &1 &0 &0 &0 &0 & 0 &0 \\
0 &0 &1 &0 &0 &0 & 0 &0 \\
0 &0 &0 &1 &0 &0 & 0 &0 \\
w_{1,1} & w_{1,2} & w_{1,3} & w_{1,4} &0 &0 &0 &0\\
0 &0 &0 &0 &1 &0 &0 &0 \\
0 &0 &0 &0 &0 &1 &0 &0 \\
0 &0 &0 &0 &0 &0 &1 &0 \\
0 &0 &0 &0 &0 &0 &0 &1 \\
0 &0 &w_{2,3} & w_{2,4} & w_{2,5} & w_{2,6} & w_{2,7} & w_{2,8}
\end{array}\right]$};

\node(2)[above = 0cm of 1]{$\color{white}\left[\begin{array}{*8{c}}
	w_{1,1} & w_{1,2} & w_{1,3} & w_{1,4} &0 &0 &0 &0\\
	0 &0 &w_{2,3} & w_{2,4} & w_{2,5} & w_{2,6} & w_{2,7} & w_{2,8}\\
	\color{black} u_1 & \color{black}u_2 & \color{black}u_3 & \color{black}u_4 & \color{black}u_5 &\color{black} u_6 & \color{black}u_7 & \color{black}u_8
\end{array} \color{white}\right]$};
\node(3)[above = 0.5cm of 1]{Information Packets};
\node(4)[left = -0.5cm of 1]{$\begin{array}{*1{c}}
1\\
2\\
3\\
4\\
5\\
6\\
7\\
8\\
9\\
10
\end{array}$};
\node(5)[left = 0.5cm of 1, rotate=90]{Time};
\end{tikzpicture}
	\caption{Example generator matrix for the low delay code with sliding window showing the coefficients used to produce each packet. In this example, we assume that the transmitter has obtained knowledge from the receiver by time 10 indicating that it has successfully received/decoded packets $u_1$ and $u_2$ allowing it to adjust the left-hand edge of the coding window to exclude them from packet $c_2$. \edit{Image adapted from \cite{Karzan2017}.}}
	\vspace{-10pt}
	\label{fig:sliding-window}
\end{figure}

The receiver decodes on-the-fly once enough packets/degrees of freedom have been received. In more detail, the receiver maintains a generator matrix $G_t$ at time $t$, which is similar to that shown in Fig. \ref{fig:sliding-window} except that it is composed only of the coefficients obtained from received packets. If $G_t$ is full rank, Gaussian elimination is used to recover from any packet erasures that may have occurred during transit.   We will make the standing assumption that the field size $Q$ is sufficiently large that with probability approaching one each coded packet helps the receiver recover from one information packet erasure i.e. each coded packet row added to generator matrix $G_t$ increases the rank of $G_t$ by one. 

In summary, this streaming code construction generates coded packets that are (i) individually streamed between information packets (rather than being transmitted in groups of size $n-k$ packets) and (ii) each coded packet protects all preceding information packets (rather than just the information packets within its block).  See \cite{Karzan2017} for a detailed analysis of the throughput and delay performance of this code, but for a given code rate it is easy to see that this code construction tends to decrease the overall in-order delivery delay at the receiver compared to a block code, as illustrated in the example above.    

\subsection{Decision Problem}\label{sec:decision}

Our interest in the above streaming code construction is twofold.  Firstly, for a given coding rate under a wide range of conditions it offers lower in-order deliver delay compared to standard block codes \cite{Karzan2017}.  Thus it provides a useful starting point for developing methods for low delay transmission across lossy network paths.    Secondly, it lends itself to being embedded within a clean decision problem.  Namely, one where rather than transmitting coded packets periodically according to a predetermined schedule, at each transmission opportunity the transmitter dynamically decides whether to send an information packet or a coded packet based on feedback from the receiver\footnote{Use of block codes leads to a significantly more complex decision problem.  To see this observe that losing more than \edit{$n-k$} packets within a block requires transmission of additional coded packets from that block in order to avoid a decoding failure.  These are then received interleaved with later blocks.  Thus we lose the renewal structure of open-loop block code constructions and the decision-maker needs to (i) keep track of multiple generations of interleaved blocks, each perhaps of a different size, and (ii) decide from which block to send a coded packet as well as deciding whether to send an information or coded packet.}.

Formally, assume a time-slotted system where each slot corresponds to transmission of a packet.  We have an arrival process consisting of a sequence of information packets $\{A_k,k=1,2,\dots\}$, where $A_k\in\{0,1\}$ is the number of new information packets in slot $k$, and define $\bar{a}:=\lim_{k\rightarrow\infty}\frac{1}{k}\sum_{i=1}^{k}A_i$ as the average arrival rate. These information packets are buffered at the transmitter and then sent across a lossy path to a receiver. The queue occupancy $Q^t_k$ at the transmitter\footnote{\edit{Note that packets dequeued from $Q^t$ are held in an encoding buffer at the transmitter until the receiver has signalled that they have been successfully received and so the left-hand edge $L$ in (\ref{eq:code}) can be updated, see earlier discussion. }} in slot $k$ behaves according to:
\begin{align}
Q^t_{k+1} = [Q^t_k + A_k - S_k]^+
\label{eq:Qt}
\end{align}
where $S_k\in\{0,1\}$ is the number of information packets transmitted in slot $k$ and $Q^t_1=0$. We let $\bar{s}:=\lim_{k\rightarrow\infty}\frac{1}{k}\sum_{i=1}^{k}S_i$ denote the average transmit rate.

Define a random variable $X_k$, which takes value $1$ when a packet transmitted in slot $k$ is erased and $0$ otherwise. We will assume the sequence of random variables $\{X_k\}$ is i.i.d. $X_k \sim X$ with $\text{Prob} \left(X=1\right)=p$, and that when $p=0$ then $X_k=0$ for all $k$ (so as $k\rightarrow\infty$ the occurrence of a non-zero but finite number of losses is excluded).   

Received packets are buffered at the receiver until they can be delivered in-order to an application i.e. when an information packet is erased then subsequently arriving information packets are buffered until the lost packet can be recovered.   A coded packet sent in slot $k$ is built as the random linear combination of all information packets sent before slot $k$. In each slot $k$ the receiver also sends feedback to the transmitter, informing of the packets already received as of slot $k$. This feedback arrives at the transmitter after delay $d$, in slot $k+d$. It is assumed, for simplicity, that none of these feedback packets are lost.

Fig.~\ref{Fig: Schematic} illustrates this problem setup.  In each slot $k$ the transmitter has the choice of (i) doing nothing, (ii) sending the information packet at the head of the transmitter queue, or (iii) sending a coded packet.     Our task is to solve the transmitter decision problem while satisfying a number of constraints: both the transmitter and receiver queues are stabilized, the link capacity is respected, and the buffering delay at the receiver is kept small.

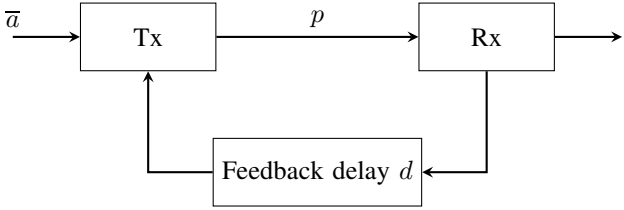
\begin{figure}
	\newlength{\mywidth}	
	\setlength{\mywidth}{0.9cm}
	\begin{tikzpicture}
	
	\tikzstyle{box node} = [rectangle, draw=black, minimum height=\mywidth, minimum width=2\mywidth]
	\node [box node] (tx) at (0,0) {Tx};
	\node [box node] (rx) at +(5\mywidth,0) {Rx};
	\node [box node] (fd) at (2.5\mywidth, -2\mywidth) {Feedback delay $d$};
	
	\draw[thick, ->, >=stealth] (-2\mywidth,0) node[above]{$\overline{a}$}  -- (tx.west) ;			
	\draw[thick, ->, >=stealth] (rx) --  (7\mywidth,0);
	\draw[thick, ->, >=stealth] (tx) -- (rx) node[pos=0.5,above]{$p$};
	\draw[thick, ->, >=stealth] (rx.south) -- (5\mywidth, -2\mywidth)  -- (fd.east);	
	\draw[thick, ->, >=stealth] (fd.west) -- (0\mywidth, -2\mywidth) -- (tx.south);	
	\end{tikzpicture}
	\caption{Schematic of the decision problem setup.  Packets arrive at Tx with mean rate $\bar{a}$, are transmitted from Tx to Rx and may be erased with probability $p$.  Rx informs Tx of its state via feedback, which is delayed by $d$ slots.}
	\label{Fig: Schematic}
\end{figure}

\section{Low Delay Scheduling Policies}

\subsection{Introduction}

When the feedback delay is zero then the decision problem in Fig. \ref{Fig: Schematic} is akin to ARQ, which of course has been well studied and for which fairly complete results are known.  However, situations where the feedback delay is non-zero have received far less attention in the literature.  In part this is because most work has focussed on the link layer where feedback delays are low, plus it is well known that open-loop block codes (which do not use feedback) are capacity achieving.  And in part this is because of the complexity of the decision problem with delayed feedback, which grows combinatorially with the feedback delay.   As already noted, next generation transport protocols seek to achieve low delay transmission over end-to-end paths.   This means that they are required to operate with significant delays before feedback is received.  This, together with our observation in Section \ref{sec:decision} that the low delay streaming code construction in Section \ref{sec: RandomLinearConstruction} lends itself to the use of feedback to make more refined decisions as to when to send coded packets, motivates revisiting the analysis and design of schedulers using delayed feedback.   

A basic difficulty is that the complexity of deciding on an optimal packet schedule grows exponentially with the feedback delay.   This means that optimal decision-making quickly becomes unmanageable for real-time operation.   Ad hoc heuristic approaches are of course possible, but they typically remain difficult to analyze and come with few performance guarantees. To make progress we make use of the observation that the decoding process at the receiver can be modelled using a queueing approach.  Namely, information packets arriving at the receiver are delivered in-order to an application until an information packet is lost, at which point subsequent information packets are buffered until the lost packet can be recovered.   Each arriving coded packet can repair the loss of any one preceding information packet, with decoding taking place once the number of received coded packets matches the number of erased information packets.   \edit{We thus define a virtual queue at the receiver, with occupancy $Q^r_k$, which behaves according to:
\begin{align}
Q^r_{k+1} = [Q^r_k + {S}_k \cdot X_k - C_k(1-X_k)]^+
\end{align}
where $X_k=1$ when packet $k$ is erased (lost) and $0$ otherwise, $S_k=1$ when an information packet is sent in slot $k$, $C_k=1$ if a coded packet is sent, while $C_k=S_k=0$ when no transmission is made. The queue occupancy $Q^r_k$ increases whenever an information packet is deleted and decreases if a coded packet is successfully received. Decoding events occur at slots $k$ where $Q^r_k=0$.  While low queue occupancy is, by itself, no guarantee of low decoding delay, in practice it tends to encourage frequent emptying of the virtual queue and so short decoding delay.}

Intuitively, the length of this virtual queue is correlated with the in-order delivery delay at the receiver -- as $Q^r_k$ grows the number of information packets buffered at the receiver will also tend to grow.   The relationship is not one to one, and we explore it further in the next section, but as we will see it is sufficient to form the basis of simple yet effective scheduling policies.   Importantly, by taking this approach we are able to obtain bounds on delay and rate which can be used for analysis and design.

\subsection{Relating Delay and Queue Occupancy}
\label{Sec: relating delay}

We proceed by considering in more detail the relationship between end-to-end in-order delivery delay, the transmitter queue occupancy $Q^t_k$ and the receiver virtual queue occupancy $Q^r_k$.  First, observe that the end-to-end delay can be divided into: (i) the time between being enqueued at the sender and being first transmitted, $D_{qt}$, and (ii) the time between being first transmitted and when the packet is successfully delivered to the application layer, $D_{qr}$.  We expect that $D_{qt}$ is related to $Q^t_k$ and $D_{qr}$ with $Q^r_k$, and indeed this can be seen in Fig.~\ref{Fig: ComparisonMuLambdaDelay}. This figure plots the average of the delays, after repeating the experiment 100 times, $D_{qt}$ and $D_{qr}$ per packet Vs. the queue occupancies $Q^t_k$ and $Q^r_k$ over a path with erasure rate $p=0.2$ and with coded packets sent periodically every $p/(1-p)$ information packets. Also indicated is the 95\% confidence interval.
%
 %
The strong correlation between delay and queue occupancy is clearly evident.  Further, it can be seen that the impact of the receiver queue occupancy $D_{qr}$ on delay is much larger than that of the transmitter queue $D_{qt}$.   This is perhaps to be expected, since a loss causes all subsequent information packets to be delayed at the receiver until the loss is repaired and decoding takes place ($Q^r_k$ becomes zero), hence amplifying the effect of a non-zero queue occupancy $Q^r_k$ on delay.    Although the data in Fig.~\ref{Fig: ComparisonMuLambdaDelay} is for a particular choice of loss and arrival rate it is representative of the behaviour seen for other choices.

\begin{figure}
	\def \figurewidth {0.75\columnwidth}
	\def \figureheight {3.0cm}
	\pgfplotstableread{figures/data/delay_lambda_p02_a07_d0.dat}{\datasetA} 
\pgfplotstableread{figures/data/delay_mu_p02_a07_d0.dat}{\datasetB} 
\begin{tikzpicture} 
font = \scriptsize,

\begin{axis}[scale only axis, 
width=\figurewidth,
height=\figureheight,
mark options={solid},
ymin=0,
ymax=150,
xmin=0,
xmax=9,
ylabel={e2e delay (\# slots)},
xlabel={Queue size, ($Q^t$ ,  $Q^r$) (\# pkts)},
compat=1.3,
legend style={legend pos = north west, draw=none, fill=none , legend columns=2, font = \scriptsize}
]



\addplot [color=black!30, line width=1.5pt]
plot [error bars/.cd, y dir = both, y explicit]
table[x index = 0, y index =1, y error index=2, y error minus index=2] from \datasetA;
\addlegendentry{$D_{qt}$}

\addplot [color=black, line width=1.5pt]
plot [error bars/.cd, y dir = both, y explicit]
table[x index = 0, y index =1, y error index=2, y error minus index=2] from \datasetB;
\addlegendentry{$D_{qr}$}

\end{axis}

\end{tikzpicture}
	\caption{Impact of queue lengths $Q^t_k$ and $Q^r_k$ on the average delay at the transmitter, $D_{qt}$, and the receiver, $D_{qr}$. In this experiment, erasure rate is $p=0.2$, arrival rate is $\overline{a} = 0.7$.   }
	\label{Fig: ComparisonMuLambdaDelay}
\end{figure}
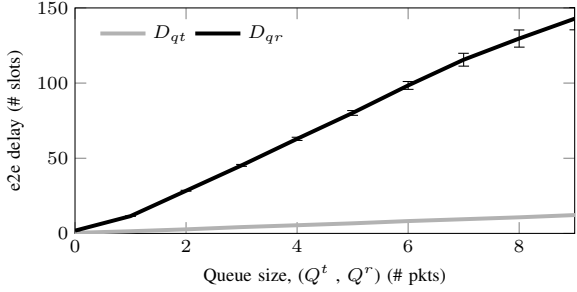

\subsection{Transmission Policies}
\label{subsec:transmissionpolicies}

Based on the insight provided by the above analysis we consider the following class of transmission policies:

\begin{align}
&C_k \in \arg\min_{C\in\{0,1\}}  F({Q}^r_{k-d},\hat{Q}^r_k,Q^t_k)C \\
&\hat{Q}^r_{k}= \hat{\theta}({Q}^r_{k-d})\\
&S_k=\min\{Q_k^t+A_k,1-{C}_k\}
\end{align}
where function $F(\cdot)$ is a design parameter, which we will discuss in more detail shortly. Observe that selection of $C_k$ uses only information available at the sender at time $k$.   Since $S_k=\min\{Q_k^t+A_k,1-{C}_k\}$, an information packet is transmitted when (i) $1-C_k=1$, and (ii) the transmission queue contains a packet to be sent. Furthermore, $Q^r_{k-d}$ is only available at the sender after feedback delay $d$. We will focus on the estimator

\begin{align}
\hat{Q}^r_k = \hat{\theta}(Q^r_{k-d})={Q}^r_{k-d}+\sum_{j=k-d}^{k-1}({S}_{j}p-C_j(1-p))\label{eq:est}
\end{align}
which simplifies to $\hat{\theta}(Q^r_{k-d})=Q^r_k$ when the feedback delay $d=0$.   This estimator makes a $d$-step ahead prediction of the value of $Q^r_k$ based on ${Q}^r_{k-d}$ and the average path loss $p$.    We will consider the impact of the accuracy of estimator predictions in more detail shortly.  Other choices of estimator are of course possible, but (\ref{eq:est}) has the virtues of simplicity and tractability.   

This class of transmission policies includes ARQ and open-loop FEC as special cases. Namely, when $F({Q}^r_{k-d},\hat{Q}^r_k,Q^t_k)=-\hat{Q}^r_k$ and $d=0$, then $C_k=1$ when ${Q}^r_k>0$ i.e. a coded packet is sent whenever the receiver reassembly queue is non-empty. Since for code construction considered this coded packet will actually be an information packet, we have ARQ. Similarly, selecting $F({Q}^r_{k-d},\hat{Q}^r_k,Q^t_k)=-(\hat{Q}^r_{k}-{Q}^r_{k-d})$ then as $d\rightarrow\infty$ we recover the open-loop FEC in~\cite{Karzan2017}, whereby a coded packet is sent every $p/(1-p)$ information packets. To see this, observe that $C_k=1$ when $\hat{Q}^r_{k}-{Q}^r_{k-d}=p(\sum_{j=k-d}^{k-1}({S}_{j}-C_j(1-p)/p)>0$.   

Recall from Section~\ref{Sec: relating delay} that the delay is much more strongly affected by the receiver queue occupancy $Q^r$ than by the transmitter queue occupancy $Q^t$.  With this in mind, Fig.~\ref{Fig: Qr_Qt_delay} compares the end-to-end system delay for different transmission policies. First, we take ARQ as a baseline scheme, comparing it with $F(Q^r_{k-d}, \hat{Q}^r_k, Q^t_k) = \rho \cdot Q^t_{k} - \hat{Q}^r_{k}$, where $\rho$ is a configuration parameter that modulates the weight given to the transmission queue length. As can be seen, the more weight that is given to $Q_t$ (higher $\rho$), the longer the end-to-end system delay.   This suggests that we should favour policies $P$ such that:
\begin{figure}
\centering
\subfloat[Packet delay vs. $\epsilon$ ($d=0$)]{
	\def \figurewidth {0.8\columnwidth}
	\def \figureheight {3.0cm}
	\hspace{-0.5cm}
	\pgfplotstableread{figures/Qt_roQr/ro0_p01_e001_01.dat}{\datasetZero} 
\pgfplotstableread{figures/Qt_roQr/ro001_p01_e001_01.dat}{\datasetTwentyFive}
\pgfplotstableread{figures/Qt_roQr/ro05_p01_e001_01.dat}{\datasetFifty}
\pgfplotstableread{figures/Qt_roQr/ro1_p01_e001_01.dat}{\datasetOne}

\begin{tikzpicture} 
font = \scriptsize, 
\begin{axis}[scale only axis, 
	ymode=log,
    width=\figurewidth,
    height=\figureheight,
    mark options={solid},
    ymin=0,
    minor tick num=1,
	xmin=0.01,
	xmax=0.2,
	xtick={0.01, 0.05, 0.1, 0.15, 0.2},
	xticklabels={0.01,0.05,0.1,0.15,0.2},
    ylabel={e2e delay (\# slots)},
    xlabel={$\epsilon$},
    compat=1.3,
	legend style={ legend pos = north east, draw=none, fill=none,legend columns=3}
    ]

\addplot [color=black, dotted, line width=1.5pt]
plot[]
table[x index = 0, y index =1] from \datasetZero;
\addlegendentry{\emph{ARQ}} 

\addplot [color=black!30, line width= 1.5pt]
plot[]
table[x index = 2, y index =3] from \datasetZero;
\addlegendentry{$\rho  = 0.00$}

\addplot [color=black, line width= 1.5pt]
plot[]
table[x index = 2, y index =3] from \datasetTwentyFive;
\addlegendentry{$\rho = 0.25$}



\end{axis}
\end{tikzpicture}
	}\quad\\
\subfloat[Packet delay vs. feedback delay $d$ ($\bar{a}=0.8$)]{
    \def \figurewidth {0.8\columnwidth}
	\def \figureheight {3.0cm}
    \pgfplotstableread{figures/data/delay_vs_d_a08_p01.dat}{\dataset} 

\begin{tikzpicture} 
font = \scriptsize, 
\begin{axis}[scale only axis, 
	ymode=log,
    width=\figurewidth,
    height=\figureheight,
    mark options={solid},
    ymin=0,
    ymax=100,
    minor tick num=1,
	xmin=0,
	xmax=100,
    ylabel={e2e delay (\# slots)},
    xlabel={$d$},
    compat=1.3,
	legend style={ legend pos = south east, draw=none, fill=none,legend columns=3}
    ]

\addplot [color=black, dotted, line width=1.5pt]
plot[]
table[x index = 0, y index =7] from \dataset;
\addlegendentry{\emph{ARQ}}

\addplot [color=black!30, solid, line width=1.5pt]
plot[]
table[x index = 0, y index =1] from \dataset;
\addlegendentry{$\rho  = 0.00$}

\addplot [color=black, line width= 1.5pt]
plot[]
table[x index = 0, y index =3] from \dataset;
\addlegendentry{$\rho = 0.25$}


\end{axis}
\end{tikzpicture}
}
	\caption{Comparison of packet delay vs. arrival rate $\epsilon=1-p-\bar{a}$ and feedback delay $d$ for ARQ $F(Q^r_{k-d}, \hat{Q}^r_k, Q^t_k) = -Q^r_{k-d}$ system and  $F(Q^r_{k-d}, \hat{Q}^r_k, Q^t_k) =  \rho Q^t_{k} - Q^r_{k-d}$. Loss rate $p=0.1$, ${Q}^r_{k-d} = Q^r_{k}$}
	\label{Fig: Qr_Qt_delay}
\end{figure}
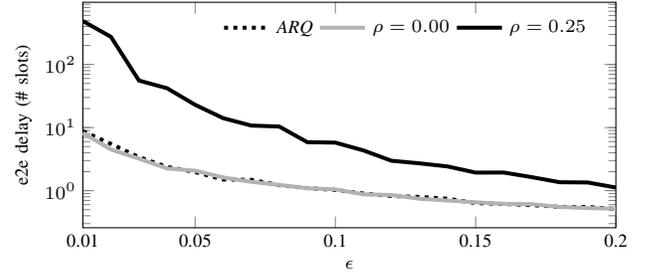
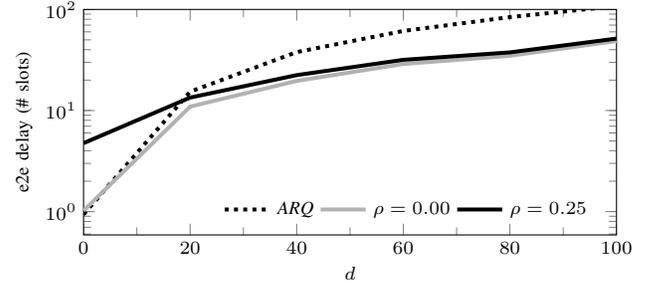

\begin{align}
&C_k \in \arg\min_{C\in\{0,1\}}  (-\hat{Q}^r_k +\gamma ) C \label{eq:up1}\\
&\hat{Q}^r_{k}=\hat{\theta}({Q}^r_{k-d})\label{eq:up2}\\
&S_k=\min\{Q_k^t+A_k,1-{C}_k\}
\end{align}
\noindent where $\gamma\ge 0$ is a design parameter. Observe that this class of policies corresponds to a threshold rule, namely ${C}_k=1$ when $\hat{Q}^r_k -\gamma  > 0$ and ${C}_k=0$ otherwise. As noted above, when $d=0$ and $\gamma= 1$ this transmission policy reduces to ARQ, while when $d\rightarrow\infty$ then it reduces to open-loop FEC. That is, in these two boundary cases this transmission policy reverts to the state of the art.

\subsection{Estimator Accuracy}

\edit{Before proceeding to analyse transmission policy $P$ we first derive some bounds on the accuracy of estimator (\ref{eq:est}) that will prove useful later.   

The following lemma is a restatement of \cite[Proposition 3.1.2]{Mey08},
\begin{lemma}[Queue Continuity]
\label{th:skorokhodcontinuity}
Consider queue updates $q_{k+1}  = [q_{k} + \omega_k]^+$ and $\tilde{q}_{k+1}   = [ \tilde{q}_k +  \tilde{\omega}_k]^+$ where $q_1  = \tilde{q}_1\ge 0$ and  $\omega_k$, $\tilde{\omega}_k\in\mathbb{R}$ are the queue increments.  Suppose $| \sum_{i=1}^k \omega_i - \tilde{\omega}_i | \le \delta/2$ for all $k$ and some $\delta \ge 0$.  Then $|q_k - \tilde{q}_k| \le \delta$, $k=1,2,\dots$.
\end{lemma}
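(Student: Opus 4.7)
The plan is to prove this via the explicit Skorokhod/Lindley representation of a reflected random walk rather than by direct induction on $k$. A naive inductive argument (assume $|q_k-\tilde q_k|\le\delta$, then bound $|q_{k+1}-\tilde q_{k+1}|$) is tempting but ultimately will not work, because the hypothesis only controls the running sums $\sum_{i=1}^k(\omega_i-\tilde\omega_i)$ and says nothing about individual slot differences $|\omega_k-\tilde\omega_k|$, which can be arbitrarily large; one can easily construct examples where the per-step differences are huge while the cumulative differences remain small. So the argument must use the whole history at once, and that is the first hurdle to overcome.

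The key tool is the representation of a reflected nonnegative process as a supremum. Setting $S_0=0$ and $S_k=\sum_{i=1}^k\omega_i$, one unrolls the recursion $q_{k+1}=[q_k+\omega_k]^+$ repeatedly and obtains
\begin{equation*}
q_{k+1} \;=\; S_k \;-\; \min\!\Bigl(-q_1,\ \min_{0\le j\le k-1} S_j\Bigr),
\end{equation*}
and similarly for $\tilde q_{k+1}$ using $\tilde S_k=\sum_{i=1}^k\tilde\omega_i$, with the same $q_1=\tilde q_1$. I would verify this identity by a short induction on $k$ (straightforward case analysis on whether the $[\cdot]^+$ is active or not), since it is really the only nontrivial algebraic step in the proof.

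From here the bound drops out cleanly. The difference $q_{k+1}-\tilde q_{k+1}$ splits into two pieces: (i) $S_k-\tilde S_k$, which by hypothesis is bounded in absolute value by $\delta/2$, and (ii) the difference of the two minima. For the second piece I would use two standard $1$-Lipschitz facts for the minimum operator, namely $|\min(a,x)-\min(a,y)|\le|x-y|$ and $|\min_j x_j-\min_j y_j|\le \sup_j|x_j-y_j|$, applied with $a=-q_1$ and with the sequences $\{S_j\}$ and $\{\tilde S_j\}$. The hypothesis again gives $\sup_j|S_j-\tilde S_j|\le\delta/2$, so this piece contributes at most $\delta/2$.

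Adding the two contributions by the triangle inequality yields $|q_{k+1}-\tilde q_{k+1}|\le\delta/2+\delta/2=\delta$, which is the claim. The main obstacle, as flagged above, is resisting the temptation to induct directly on $k$ and instead committing to the global Skorokhod representation; once that step is in place, the rest is just the Lipschitz property of $\min$ and the hypothesis.
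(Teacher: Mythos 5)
The paper never actually proves this lemma---it is presented as a restatement of \cite[Proposition 3.1.2]{Mey08} and used as a black box---so there is no in-paper argument to compare against. What you give is the standard proof of the $2$-Lipschitz continuity (in the sup norm) of the discrete Skorokhod/reflection map, and its structure is sound: write each reflected process as its free process minus a running minimum of partial sums, then invoke the hypothesis $\sup_k |S_k-\tilde S_k|\le \delta/2$ once for the free-process term and once (via the $1$-Lipschitz property of $\min$) for the running-minimum term. Your remark that a naive step-by-step induction cannot work is also correct, since the hypothesis controls only cumulative, not per-slot, increment differences.

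One concrete correction: your reflection identity has an off-by-one in the index range. Unrolling $q_{k+1}=\max(0,\,q_k+\omega_k)$ gives $q_{k+1}=\max\bigl(0,\ \max_{1\le j\le k-1}(S_k-S_j),\ S_k+q_1\bigr)=S_k-\min\bigl(-q_1,\ \min_{1\le j\le k}S_j\bigr)$: the running minimum must include $S_k$ (the $j=k$ term is what produces the outer $0$), while including $S_0=0$ is redundant because $-q_1\le 0$. As written, with $\min_{0\le j\le k-1}S_j$, the formula already fails at $k=1$, where it yields $\max(S_1+q_1,\,S_1)=S_1+q_1$ instead of $[q_1+\omega_1]^+$ (take $q_1=1$, $\omega_1=-5$). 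The slip does not derail the rest of the argument---both Lipschitz facts apply verbatim to the corrected running minimum and the $\delta/2+\delta/2$ bookkeeping is unchanged---but the inductive verification of the identity that you propose to carry out is exactly where this would surface, so do carry it out.
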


\noindent Applying Lemma \ref{th:skorokhodcontinuity} to $Q^r_k$ and $\hat{Q}^r_k$ then $\omega_k=S_kX_k - C_k(1- X_k)$, $\tilde{\omega}_k={S}_{k}p-C_k(1-p)$ and $\sum_{i=1}^k (\omega_i-\tilde{\omega}_i)=\sum_{j=k-d}^{k-1} (S_j+C_j)(X_j-p)$.   Since $X_j\in\{0,1\}$ and $0\le S_j+C_j\le 1$ then
\begin{align}
   -dp\le \sum_{i=1}^k (\omega_i-\tilde{\omega}_i) = 
    \le d(1-p)
\end{align}
Hence, 
\begin{align}
    |Q^r_k - \hat{Q}^r_k|\le \delta = 2d\max\{p,1-p\}\label{eq:upperlimit}
\end{align}
We can obtain sharper bounds on $\sum_{j=k-d}^{k-1}(X_j-p)$ by taking more advantage of the fact that $X_j$ is a random variable.  For example, when losses are i.i.d then the $\{X_j\}$ are also i.i.d.  and we can use Hoeffding's inequality~\cite{LeonGarcia2008} applied to Bernoulli random variables to obtain 
\begin{equation}
\text{Prob}\left(\left|\sum_{j=k-d}^{k-1}(X_j-p)\right|\ge \epsilon dp\right)
\le 2e^{-2\epsilon^2 d}
\end{equation}
where $\epsilon>0$. It follows immediately that $\text{Prob}(|Q^r_k - \hat{Q}^r_k|\ge \delta)
\le 2e^{-2(\delta/2dp)^2 d}$ and so
\begin{align}
|Q^r_k - \hat{Q}^r_k |
\le \delta=2p\sqrt{\frac{d}{2}\log\left(\frac{2}{1-q}\right)}
\label{eq:hoeff}
\end{align}
with probability at least $q$.  The bound (\ref{eq:hoeff}) is generally substantially sharper than bound (\ref{eq:upperlimit}), as can be seen in Fig. \ref{fig:bounds}.
}

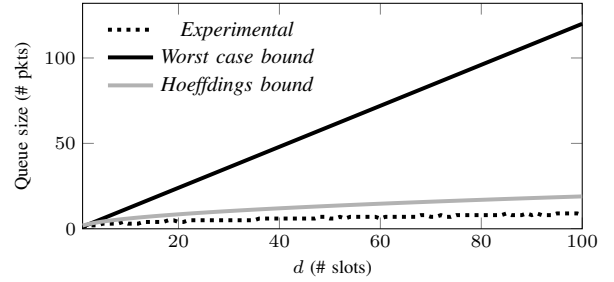
\begin{figure}
    \centering
    \def \figurewidth {0.75\columnwidth}
	\def \figureheight {3.0cm}
    \pgfplotstableread{figures/data/bound_q09_p06.dat}{\dataset} 

\begin{tikzpicture} 
font = \scriptsize, 
\begin{axis}[scale only axis, 
width=\figurewidth,
height=\figureheight,
mark options={solid},
ymin=0,
xmin=1,
xmax=100,
ylabel={Queue size (\# pkts)},
xlabel={$d$ (\# slots)},
compat=1.3,
legend style={legend pos=north west, draw=none, fill=none,legend columns=1, font = \footnotesize}
]

\addplot [color=black, dotted,line width=1.5pt]
plot[]
table[x index = 0, y index =1] from \dataset;
\addlegendentry{\emph{Experimental}}

\addplot [color=black,line width=1.5pt]
plot[]
table[x index = 0, y index =2] from \dataset;
\addlegendentry{\emph{Worst case bound}}

\addplot [color=black!30, solid, line width=1.5pt]
plot[]
table[x index = 0, y index =3] from \dataset;
\addlegendentry{\emph{Hoeffdings bound}}


\end{axis}
\end{tikzpicture}
    \caption{\edit{Comparing the bounds on $|Q^r_k - \hat{Q}^r_k|$ obtaining using worst-case analysis (\ref{eq:upperlimit}) and using Hoeffding's inequality (\ref{eq:hoeff}) (with $q=0.9$). Loss rate $p=0.6$.}}
    \label{fig:bounds}
\end{figure}

\subsection{Bounding Virtual Receiver Queue}

 \edit{Armed with these bounds on estimator accuracy we are now in a position to bound the receiver queue occupancy (recall that we have already seen that the end-to-end delay mostly depends on the receiver queue occupancy).  The following establishes that for the class of policies $P$ with estimator (\ref{eq:est}) we can upper bound the queue length by $\gamma+\delta+1$,} 
\begin{theorem}\label{lem:one}
	Consider transmission policy $P$ using estimator (\ref{eq:est}).   Suppose \edit{estimate $\hat{Q}^r_{k}$ satisfies ${Q}^r_{k}-\hat{Q}^r_{k}\le\delta$, $k=1,2,\dots$}.  When $0<p<1$ then $Q^r_{k}$ converges almost surely to the interval  $0 \le Q^r_{k}  \le  \gamma+\delta+1$ \edit{as $k\rightarrow\infty$}.
\end{theorem}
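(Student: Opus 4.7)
The plan is to exploit a simple drift argument: the one-sided estimator bound $Q^r_k - \hat Q^r_k \le \delta$ guarantees that whenever $Q^r_k$ exceeds $\gamma + \delta$, the threshold rule in (\ref{eq:up1}) forces the transmitter to emit a coded packet, which pushes the receiver queue back down.

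First I would translate the forcing. If $Q^r_k > \gamma + \delta$, then $\hat Q^r_k \ge Q^r_k - \delta > \gamma$, and the policy (\ref{eq:up1}) yields $C_k = 1$. The rule $S_k = \min\{Q^t_k + A_k, 1 - C_k\}$ then gives $S_k = 0$, so the receiver update reduces to
\begin{equation}
Q^r_{k+1} = [Q^r_k - (1-X_k)]^{+},
\end{equation}
which is monotone non-increasing and strictly decreases by $1$ whenever $X_k = 0$, an event of probability $1-p > 0$.

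Next I would establish invariance of the candidate absorbing set $\{Q^r_k \le \gamma + \delta + 1\}$. In the interior region $Q^r_k \le \gamma + \delta$, the worst-case one-step increment is $+1$ (attained when $S_k = X_k = 1$), so $Q^r_{k+1} \le \gamma + \delta + 1$. At the boundary $Q^r_k = \gamma + \delta + 1$ the forcing above already applies since $\gamma + \delta + 1 > \gamma + \delta$, so $Q^r_{k+1} \le Q^r_k$. Hence once the process enters the interval, it stays there.

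Finally, to reach the interval from an arbitrary initial value, I would stochastically dominate the trajectory of $Q^r_k$, while it remains above $\gamma + \delta + 1$, by a pure descent process whose inter-decrement times are i.i.d.\ Geometric$(1-p)$ random variables. Because $p < 1$, the hitting time of the interval is almost surely finite, so $Q^r_k \in [0, \gamma + \delta + 1]$ for all sufficiently large $k$, almost surely. The main subtlety is bookkeeping at the boundary: I have to use the strict inequality $Q^r_k > \gamma + \delta$ in the forcing step (so that $\hat Q^r_k > \gamma$ strictly and the argmin in (\ref{eq:up1}) uniquely selects $C = 1$), and verify that the one-sided estimator error bound $\delta$ is indeed uniform in $k$, which is exactly what is provided by (\ref{eq:upperlimit}) (worst case) or by (\ref{eq:hoeff}) in the high-probability sense.
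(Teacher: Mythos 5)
Your argument is correct, but it runs along a genuinely different track from the paper's. The paper works entirely with the \emph{estimator} $\hat{Q}^r_k$: it derives the recursion $\hat{Q}^r_{k+1}=\hat{Q}^r_k+(X_{k-d}-p)(S_{k-d}+C_{k-d})+S_kp-C_k(1-p)\,[+\,C_{k-d}(1-X_{k-d})]$ (with the extra term when $Q^r_{k-d}=0$), shows by a case analysis on the threshold that $-\gamma+\hat{Q}^r_k$ never increases above $1$ and strictly decreases with positive probability when above $1$, concludes $\hat{Q}^r_k\le\gamma+1$ asymptotically, and only then transfers to the physical queue via $Q^r_k\le\hat{Q}^r_k+\delta$. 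You instead work directly with $Q^r_k$: the one-sided hypothesis $Q^r_k-\hat{Q}^r_k\le\delta$ guarantees the threshold fires whenever $Q^r_k>\gamma+\delta$, which collapses the update to $Q^r_{k+1}=[Q^r_k-(1-X_k)]^+$, and the bound $\gamma+\delta+1$ falls out of a one-step invariance check plus a geometric hitting-time argument (using $p<1$ and the independence of $X_k$ from the decision $C_k$, which is measurable with respect to $X_{k-d},X_{k-d-1},\dots$ only). Both routes land on the same constant. What your route buys is economy: you avoid the paper's bookkeeping around the clipped update at $Q^r_{k-d}=0$ and the auxiliary term $C_{k-d}(1-X_{k-d})$, and you make transparent which direction of the estimator error actually matters (underestimation of $Q^r$ is what could delay the firing of the coded-packet rule). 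What the paper's route buys is reuse: the drift identities for $\hat{Q}^r_k$ established there, Eqs.\ (\ref{eq:update})--(\ref{eq:update2}), are exactly the ingredients recycled in Lemma \ref{lem:two} and Theorem \ref{lem:three}, so proving the queue bound through $\hat{Q}^r_k$ sets up the rest of the analysis. Your remark about needing the strict inequality $\hat{Q}^r_k>\gamma$ so that the argmin uniquely selects $C=1$ is the right point of care; with $Q^r_k$ integer-valued and $Q^r_k>\gamma+\delta$ this is automatic, and your proof is complete as stated.
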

\begin{proof}
	
	Since
	$\hat{Q}^r_{k}={Q}^r_{k-d}+\sum_{j=k-d}^{k-1}({S}_{j}p-C_j(1-p))$ then 
	\begin{align}
	\hat{Q}^r_{k+1}=&\hat{Q}^r_{k}+({Q}^r_{k-d+1}-{Q}^r_{k-d}) \notag\\
	&+({S}_k-{S}_{k-d})p-(C_k-C_{k-d})(1-p)
	\end{align}
	Now ${Q}^r_{k-d+1}-{Q}^r_{k-d} = [{Q}^r_{k-d} + S_{k-d}X_{k-d} - C_{k-d}(1-X_{k-d}) ]^+-{Q}^r_{k-d}= {S}_{k-d}X_{k-d} - C_{k-d}(1-X_{k-d})$ when ${Q}^r_{k-d}\ge 1$. Hence,  
	\begin{align}
	\hat{Q}^r_{k+1} =& \hat{Q}^r_{k}+(X_{k-d}-p)(S_{k-d} + C_{k-d}) \notag\\
	&+ {S}_kp -C_k(1-p)
	\label{eq:update}
	\end{align}
	when  ${Q}^r_{k-d}\ge 1$.  Conversely, when ${Q}^r_{k-d}< 1$ then ${Q}^r_{k-d}=0$ since it is non-negative and integer valued.   Hence, ${Q}^r_{k-d+1}-{Q}^r_{k-d} = [{S}_{k-d}X_{k-d} - C_{k-d}(1-X_{k-d}) ]^+={S}_{k-d}X_{k-d}$ and 
	\begin{align}
	\hat{Q}^r_{k+1}=&  \hat{Q}^r_{k}+(X_{k-d}-p)(S_{k-d}+C_{k-d}) \notag \\
	& + S_kp-C_k(1-p) + C_{k-d}(1-X_{k-d}) 
	\label{eq:update2}
	\end{align}
	
	%
	We proceed by considering the following two cases. 
	
	Case (i):  $-\gamma +\hat Q^r_k \ge 1$.  Since $-\gamma +\hat Q^r_k >0$ then $C_k = 1, {S}_k =0$.  When ${Q}^r_{k-d}\ge 1$ then (\ref{eq:update}) applies and since $-\gamma +\hat{Q}^r_k \ge 1$ then $-\gamma +\hat{Q}^r_{k+1} = -\gamma +\hat{Q}^r_{k}+\Delta^1_k$ with $\Delta^1_k:=(X_{k-d}-p)(S_{k-d} + C_{k-d})-(1-p)\le 0$.  Similarly, when ${Q}^r_{k-d} < 1$ then $-\gamma + \hat{Q}^r_{k+1} = -\gamma + \hat{Q}^r_k + \Delta^2_k$ with $\Delta^2_k:=(X_{k-d}-p)(S_{k-d} + C_{k-d}) -(1-p) + C_{k-d}(1-X_{k-d}))\le 0$.  Therefore, 
	\begin{align}
	-\gamma +\hat Q^r_{k+1} \le -\gamma+\hat Q^r_{k}
	\end{align}
	Observe that $\Delta^1_k$ is strictly less than zero when $X_{k-d} = 0$ and $\Delta^2_k$ is strictly less than zero when $X_{k-d} = 0$ and $C_{k-d} = 0$. By assumption $0<p=\text{Prob}(X_k=1)<1$ and $X_{k-d}, \forall k$ are independent of $\hat Q^r_k$. Hence if $-\gamma +\hat Q^r_k \ge 1$ persists then, with probability one, a slot will occur where $X_{k-d}=0$ and so $\Delta^1_k<0$.  Further, when 
	$-\gamma +\hat Q^r_k \ge 1$ and $Q^r_{k-d} <  1$ then $\sum_{j=k-d}^{k-1}(p-C_j)= \sum_{j=k-d}^{k-1}(\hat{S}_{j}p-C_j(1-p))\ge \sum_{j=k-d}^{k-1}({S}_{j}p-C_j(1-p)) > 1+\gamma$.  Since $p<1$  and $\sum_{j=k-d}^{k-1}(p-C_j)>0$ it follows that $C_{j}=0$ for at least $\lceil d(1-p) \rceil$ of the slots in the sum. Therefore, regardless of $(S_{k-d} + C_{k-d})$, with positive probability over any $d$ slots a slot will occur where $X_{k-d}=0$, $C_{k-d}=0$ and $\Delta^2_k<0$.
	
	Case (ii) $-\gamma + \hat Q^r_k \le 1$.  We now have two subcases to consider:
	\begin{enumerate}
		
		\item[(a)]  When $0 \le-\gamma + \hat Q^r_k\le 1$, then $C_k=1$ and $S_k=0$. By update~(\ref{eq:update}) $\hat{Q}^r_{k+1} = \hat{Q}^r_{k} + (X_{k-d} - p)(S_{k-d} + C_{k-d}) - (1-p) \le \hat{Q}^r_{k} -1 + p$ and by update ~(\ref{eq:update2}) $\hat{Q}^r_{k+1} = \hat{Q}^r_{k} + (X_{k-d} - p)(S_{k-d} + C_{k-d}) - (1-p) + C_{k-d}(1-X_{k-d}) \le \hat{Q}^r_{k}$. Hence, $\hat{Q}^r_{k+1} \le \hat{Q}_k^r$ and, therefore, $-\gamma + \hat{Q}^r_{k+1} \le 1$
		
		\item[(b)] When $-\gamma + \hat Q^r_k \le 0$ then $S_k\in{0,1}$ and $C_k=0$. By update (\ref{eq:update}) $\hat{Q}^r_{k+1} = \hat{Q}^r_{k} + (X_{k-d} - p)(S_{k-d} + C_{k-d}) + pS_k \le \hat{Q}^r_{k} + p$ and by update (\ref{eq:update2}) $\hat{Q}^r_{k+1} = \hat{Q}^r_{k} + (X_{k-d} - p)(S_{k-d} + C_{k-d}) + pS_k + C_{k-d}(1-X_{k-d}) \le \hat{Q}^r_{k} + 1$. Therefore, $-\gamma + \hat{Q}^r_{k+1} \le 1$			
	\end{enumerate}
	
	We have that $-\gamma+\hat Q^r_{k+1}$ never increases and strictly decreases with positive probability when $ -\gamma + \hat Q^r_{k} > 1$. And when  $-\gamma+\hat Q^r_{k} \le 1$ then $ -\gamma+\hat Q^r_{k+1} $ never goes above $1$. Hence, we can conclude that $\hat Q^r_{k}$ converges almost surely and that it is indeed upper bounded by $-\gamma+\hat Q^r_{k+1} \le  1$ \edit{i.e. $\hat Q^r_{k+1} \le  \gamma+1$}.  Since \edit{${Q}^r_{k}-\hat{Q}^r_{k}\le\delta$} it follows that \edit{${Q}^r_{k}\le \hat{Q}^r_{k}+\delta \le \gamma+1+\delta$}, and the stated interval now follows from the fact that $Q^r_{k} \ge 0$.
\end{proof}

Importantly, observe that the bound in Theorem~\ref{lem:one} is in terms of the instantaneous queue length $Q^r_k$ and applies to every sample path.  It is therefore much stronger than a bound on the average queue length.   \edit{One immediate consequence of this, for example, is that the requirement that the estimator is accurate in the sense that ${Q}^r_{k}-\hat{Q}^r_{k}\le\delta$ can be relaxed to one that this only holds with a given probability $q$.   The bound in the Theorem~\ref{lem:one} then applies to those sample paths for which the estimator is sufficiently accurate \emph{i.e.} also applies with probability $q$.   For example, using this observation we can immediately use the Hoeffding's bound on estimator accuracy (\ref{eq:hoeff}) to select a value for $\delta$.  
} 

\edit{From Theorem~\ref{lem:one} it can be seen that the maximum queue length $Q_k^r$, and so delay, tends to increase with design parameter $\gamma$.  Hence, to minimise delay we should choose parameter $\gamma$ small.  This is also confirmed by simulation, e.g. see Figure \ref{Fig:gamma} which plots delay vs traffic load for various values of $\gamma$.  Bound (\ref{eq:hoeff})} tells us that the maximum queue length also tends to increase with the feedback delay $d$ and with loss rate $p$, although no more than linearly in both.

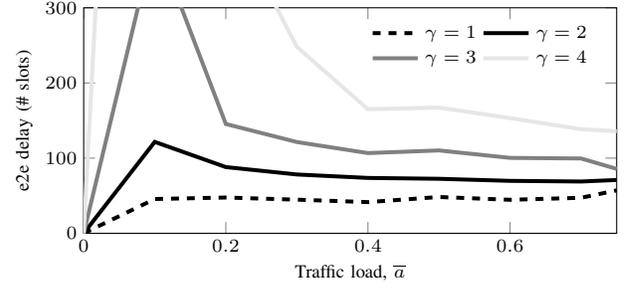
\begin{figure}
	\def \figurewidth {0.8\columnwidth}
	\def \figureheight {3.0cm}
    \pgfplotstableread{figures/data/delay_gamma_a_p02_d100.dat}{\datasetA} 
\begin{tikzpicture} 
font = \scriptsize, 

\begin{axis}[scale only axis, 
width=\figurewidth,
height=\figureheight,
mark options={solid},
ymin=0,
ymax=300,
xmin=0,
xmax=0.75,
ylabel={e2e delay (\# slots)},
xlabel={Traffic load, $\overline{a}$},
compat=1.3,
legend style={legend pos = north east, draw=none, fill=none , legend columns=2, font = \scriptsize}
]


\addplot [color=black, dashed, line width=1.5pt]
plot[]
table[x index = 0, y index =2] from \datasetA;
\addlegendentry{$\gamma=1$}

\addplot [color=black, line width=1.5pt]
plot[]
table[x index = 0, y index =3] from \datasetA;
\addlegendentry{$\gamma=2$}

\addplot [color=black!50, line width=1.5pt]
plot[]
table[x index = 0, y index =4] from \datasetA;
\addlegendentry{$\gamma=3$}

\addplot [color=black!10, line width=1.5pt]
plot[]
table[x index = 0, y index =5] from \datasetA;
\addlegendentry{$\gamma=4$}

\end{axis}

\end{tikzpicture}
\caption{\edit{Delay vs traffic load $\bar{a}$ for various values of $\gamma$ (feedback delay $d=100$, packet loss rate $0.2$ so the maximum feasible traffic load is $0.8$). }}\label{Fig:gamma}
\end{figure}


\subsection{Impact of Imperfect Prediction: Rate Sub-Optimality}

Transmission policies $P$ use estimator $\hat{\theta}(\cdot)$ to make a $d$-step ahead prediction of $Q^r_k$.    As discussed in more detail later, use of prediction lowers delay.   However, inevitably, this estimator will make mistakes when predicting $Q^r_k$ due to the uncertainty in the fate of the packets ``in flight'', i.e. those transmitted but not yet acknowledged.     When $\hat{Q}^r_k\ge \gamma$ and $Q^r_k=0$ then the scheduler will send extra coded packets that are not useful (since $Q^r_k=0$ there are no outstanding losses at the receiver).   Prediction errors therefore translate into a loss in capacity, since these extra coded packets replace information packets that would have otherwise have been sent.

\subsubsection{Capacity Achieving Transmission Policies}

\edit{
We begin introducing the following technical Lemma,

\begin{lemma}\label{lem:two}
	{Suppose $E[{S}_k|\hat{Q}^r_{k}]\ge\epsilon>0$ when $\hat{Q}^r_{k}\le \gamma$. Then $\gamma-(1-p)\le E[\hat{Q}^r_{k+1}]\le \gamma+p$.}
\end{lemma}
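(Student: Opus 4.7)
The plan is to combine the one-step update derived in the proof of Theorem~\ref{lem:one} with the threshold structure of policy $P$ and the almost-sure bracket $\hat{Q}^r_k \le \gamma+1$ provided by Theorem~\ref{lem:one}. First I would rewrite (\ref{eq:update}) and (\ref{eq:update2}) in the unified form
\[
\hat{Q}^r_{k+1} = \hat{Q}^r_k + (X_{k-d}-p)(S_{k-d}+C_{k-d}) + pS_k - (1-p)C_k + B_k,
\]
where $B_k\ge 0$ collects the reflection term coming from the non-negativity of $Q^r$. Taking expectations, the noise term vanishes because $X_{k-d}$ is independent of the decisions $S_{k-d},C_{k-d}$ (both fixed strictly before that erasure was realised), leaving
\[
E[\hat{Q}^r_{k+1}] = E[\hat{Q}^r_k] + p\,E[S_k] - (1-p)\,E[C_k] + E[B_k].
\]

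For the upper bound I would split on $\{\hat{Q}^r_k>\gamma\}$ and its complement. On $\{\hat{Q}^r_k>\gamma\}$, Theorem~\ref{lem:one} gives $\hat{Q}^r_k\in(\gamma,\gamma+1]$, policy $P$ imposes $C_k=1,S_k=0$ and a drift of $-(1-p)$, so the conditional next value lies in $(\gamma-(1-p),\gamma+p]$. On $\{\hat{Q}^r_k\le\gamma\}$ the policy forces $C_k=0$, and the hypothesis $E[S_k\mid\hat{Q}^r_k]\ge\epsilon$ produces a non-negative drift bounded above by $p$, giving an upper bound of $\gamma+p$. Aggregating the two cases, and noting that $E[B_k]$ is dominated by these leading terms when $|\hat{Q}^r_{k-d}-Q^r_{k-d}|$ is small (cf.\ the Hoeffding bound (\ref{eq:hoeff})), yields $E[\hat{Q}^r_{k+1}]\le\gamma+p$.

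For the lower bound a purely conditional argument is not sufficient: on $\{\hat{Q}^r_k\le\gamma\}$ one only obtains $E[\hat{Q}^r_{k+1}\mid\hat{Q}^r_k]\ge p\epsilon$, which is far short of $\gamma-(1-p)$. Instead I would invoke the stationary balance $p\,E[S_k]=(1-p)\,E[C_k]$, obtained by setting $E[\hat{Q}^r_{k+1}]=E[\hat{Q}^r_k]$ in the limiting regime together with $E[B_k]\approx 0$. Combined with the hypothesis $E[S_k\mid\hat{Q}^r_k]\ge\epsilon$ on $\{\hat{Q}^r_k\le\gamma\}$, this pins the time-fraction $\pi:=P(\hat{Q}^r_k>\gamma)$ away from zero, forcing the stationary distribution to place enough mass in $(\gamma,\gamma+1]$ that $E[\hat{Q}^r_k]\ge\gamma-(1-p)$.

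The main obstacle will be making the lower bound rigorous: the upper bound drops out cleanly from a case split on $\hat{Q}^r_k$, whereas the lower bound requires tying together the positive-drift regime $\hat{Q}^r_k\le\gamma$, the negative-drift regime $\hat{Q}^r_k>\gamma$, and the almost-sure bracket from Theorem~\ref{lem:one} through a careful stationarity argument, while accounting for the reflection term $B_k$, which is easy to mis-estimate.
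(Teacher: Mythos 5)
Your upper bound follows essentially the paper's route: the same unified one-step update, independence of $X_{k-d}$ from the decisions $S_{k-d},C_{k-d}$, and a case split at the threshold $\gamma$. The extra ingredient you import --- the almost-sure bound $\hat{Q}^r_k\le\gamma+1$ from Theorem~\ref{lem:one} --- is not needed by the paper but is legitimate. One caveat: your claim that the reflection term $E[B_k]$ is controlled via the Hoeffding bound (\ref{eq:hoeff}) is off the mark; $B_k=C_{k-d}(1-X_{k-d})$ is non-zero precisely when $Q^r_{k-d}=0$ and has nothing to do with estimator accuracy. The paper handles it by simply dropping it (it is non-negative, so only the ``$\ge$'' direction of the update survives) and arguing both bounds from the resulting inequality.

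The genuine gap is in your lower bound. A lower bound on $\pi=P(\hat{Q}^r_k>\gamma)$ obtained from rate balance cannot by itself force $E[\hat{Q}^r_k]\ge\gamma-(1-p)$: the complementary mass could in principle sit far below $\gamma$ (e.g.\ mass $\pi$ near $\gamma+\tfrac{1}{2}$ and mass $1-\pi$ near $0$ is consistent with any fixed $\pi>0$ yet has mean nowhere near $\gamma-(1-p)$ when $\gamma$ is large), so the stationarity detour does not close the argument. Moreover, you dismissed the conditional approach on the grounds that it only yields $E[\hat{Q}^r_{k+1}\mid\hat{Q}^r_k]\ge p\epsilon$ on $\{\hat{Q}^r_k\le\gamma\}$, but the correct statement is $E[\hat{Q}^r_{k+1}\mid\hat{Q}^r_k]\ge \hat{Q}^r_k+p\epsilon$ --- you dropped the $\hat{Q}^r_k$ term. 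That term is exactly what the paper exploits: below the threshold the conditional mean \emph{strictly increases} by at least $p\epsilon>0$, while above the threshold it decreases by exactly $1-p$ and hence cannot fall below $\gamma-(1-p)$ in one step. Consequently $-\gamma+E[\hat{Q}^r_{k+1}\mid\hat{Q}^r_k]$ is attracted to the interval $[-(1-p),\,p]$ from either side and, once inside, never leaves it, which gives both bounds simultaneously. That drift-and-trap argument is the missing idea in your lower bound.
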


\begin{proof}
	Recall (\ref{eq:update}),
	\begin{align}
	\hat{Q}^r_{k+1} = & \hat{Q}^r_{k}+(X_{k-d}-p)(S_{k-d} + C_{k-d}) \notag\\
	&+ {S}_kp -C_k(1-p)
	\end{align}
	\noindent when ${Q}^r_{k-d}\ge 1$ and (\ref{eq:update2}),
	\begin{align}
	\hat{Q}^r_{k+1} = & \hat{Q}^r_{k}+(X_{k-d}-p)(S_{k-d}+C_{k-d}) \notag \\
	& + S_kp-C_k(1-p) + C_{k-d}(1-X_{k-d}) 
	\end{align}
	\noindent when ${Q}^r_{k-d}< 1$. 
	Since 	$C_{k-d}(1-X_{k-d}) \ge0$ it follows that:
	\begin{align}
	\hat{Q}^r_{k+1}\ge &  \hat{Q}^r_{k}+(X_{k-d}-p)(S_{k-d}+C_{k-d}) \notag \\
	& + S_kp-C_k(1-p) 
	\end{align}
	\noindent for all values of ${Q}^r_{k-d}$ and taking expectations with respect to the packet arrival and loss processes,
	\begin{align}
	E[\hat{Q}^r_{k+1}| \hat{Q}^r_{k}]\ge \hat{Q}^r_{k} + E[{S}_k|\hat{Q}^r_{k}]p -E[C_k|\hat{Q}^r_{k}](1-p) 
	\end{align}
	\noindent where we have used the fact that $X_{k-d}$ is independent of $S_{k-d}$ and $C_{k-d}$.  We proceed by considering the following two cases.
	
	Case (i): $-\gamma+\hat{Q}^r_{k} \le 0$. Then $C_k=0$, and ${S}_k\in{0,1}$:
	\begin{align}
	-\gamma+E[\hat{Q}^r_{k+1}| \hat{Q}^r_{k}]\ge -\gamma+\hat{Q}^r_{k} + E[{S}_k|\hat{Q}^r_{k}]p  
	\end{align}
	\noindent Since $E[{S}_k|\hat{Q}^r_{k}]\ge \epsilon>0$, then it follows that $-\gamma+E[\hat{Q}^r_{k+1}| \hat{Q}^r_{k}]$ is strictly increasing and $-\gamma+\hat{Q}^r_{k}\le E[{S}_k|\hat{Q}^r_{k}]p\le p$.
	
	Case (ii):$-\gamma+\hat{Q}^r_{k}> 0$. Then $C_k=1$, $S_k=0$, and:
	\begin{align}
	-\gamma+E[\hat{Q}^r_{k+1}| \hat{Q}^r_{k}]\ge -\gamma+\hat{Q}^r_{k} -(1-p) 
	\end{align}
	\noindent and $-\gamma+E[\hat{Q}^r_{k+1}| \hat{Q}^r_{k}]$ is strictly decreasing and $-\gamma+E[\hat{Q}^r_{k+1}| \hat{Q}^r_{k}]> -(1-p)$. 
	
	Hence, when $-\gamma+E[\hat{Q}^r_{k+1}| \hat{Q}^r_{k}]$ is outside the interval $[-(1-p),p]$ then is is strictly attracted to this interval, and once it is within it, it stays there. The latter holds regardless of the values of $\hat{Q}^r_{k}$ and so we have that $-(1-p)\le -\gamma+E[\hat{Q}^r_{k+1}]\le p$ as claimed.
\end{proof}
}
The following theorem bounds the capacity loss induced by prediction errors:

\begin{theorem}\label{lem:three}
	Suppose {estimate $\hat{Q}^r_k$ satisfies $|\hat{Q}^r_k - Q^r_k| \le \delta$ for some $\delta\ge 0$} and $E[{S}_k|\hat{Q}^r_{k}]\ge\epsilon>0$ when $\hat{Q}^r_{k}\le \gamma$.  Then $E[\hat{s}_k] \ge (1-p)-(\frac{1}{2}+\delta+(1+\delta)(1-p))/\gamma$as $k\rightarrow\infty$, where $\hat{s}_k:=\frac{1}{k}\sum_{i=1}^k\hat{S}_i$ and $S_k \le \hat{S}_k = 1-C_k$.
\begin{proof}
	We have that
	
	\begin{align}
	(Q^r_{k+1})^2& =  ([Q^r_k + {S}_k-(1-X_k)]^+)^2\\
	& \le ([Q^r_k + \hat{S}_k-(1-X_k)]^+)^2\\
	&\le (Q^r_k + \hat{S}_k-(1-X_k))^2\\
	&=(Q^r_k)^2 + 2(\hat{S}_k-(1-X_k))Q^r_k\notag\\
	&\qquad + (\hat{S}_k-(1-X_k))^2
	\end{align}
	
	Applying this recursively we have, $(Q^r_{k+1})^2\le (Q^r_1)^2 + 2\sum_{i=1}^k(\hat{S}_i-(1-X_i))Q^r_i+ \sum_{i=1}^k (\hat{S}_i-(1-X_i))^2$.  That is,
	
	\begin{align}
	\frac{1}{k}\sum_{i=1}^k(\hat{S}_i-(1-X_i))Q^r_i &\ge -\eta -\frac{1}{2}
	\end{align}
	
	\noindent since $0\le(\hat{S}_i-(1-X_i))^2\le 1$, where $\eta:=\frac{1}{2k}(Q^r_1)^2$. Adding and subtracting $\gamma(\frac{1}{k}\sum_{i=1}^k\hat{S}_i-(1-p))$ to the LHS yields:
	
	\begin{align}
	&\frac{1}{k}\sum_{i=1}^k((-\gamma +Q^r_i)\hat{S}_i-(1-X_i)(Q^r_i-\gamma))  \notag \\
	&\qquad+\gamma  (\hat{s}_k-(1-\overline{x}_k)) \ge -\eta-\frac{1}{2}  \label{eq:inequal1}
	\end{align}
	
	\noindent where $\overline{x}_k:=\frac{1}{k}\sum_{i=1}^k{X}_i$.   
	
	The scheduler selects $\hat{S}_i \in\arg\min_{S\in\{0,1\}} (-\gamma +\hat{Q}^r_i)S$.  When $\hat{Q}^r_i\ge \gamma$ then $\hat{S}_i=0$ and otherwise $\hat{S}_i=1$.  Hence, $(-\gamma +\hat{Q}^r_i)\hat{S}_i\le 0$ for all $i=1,2,\dots$ and so $\frac{1}{k}\sum_{i=1}^k(-\gamma +\hat{Q}^r_i)S_i\le 0$.   Since $|\hat{Q}^r_k-Q^r_k|\le \delta$ and $\hat{S}_i\in \{0,1\}$ then $-\frac{1}{k}\sum_{i=1}^k(-\gamma +{Q}^r_i)\hat{S}_i\ge -\delta$.  Combining this with (\ref{eq:inequal1}) and taking expectation over the loss and arrival processes yields

	\begin{align}
	&\gamma  (E[\hat{s}_i]-(1-p) \ge \frac{1}{k}\sum_{i=1}^k(1-p)(E[Q^r_i]-\gamma) -\eta-\frac{1}{2}-\delta \label{eq:inequal2}
	\end{align}
	
	\noindent where we have used the fact that $X_i$ is independent of $Q^r_i$.  By Lemma \ref{lem:two} and the fact that  $|\hat{Q}^r_k-Q^r_k|\le \delta$ we have that $E[Q_i^r]-\gamma\ge -1-\delta$.    Combining this with (\ref{eq:inequal2}) yields
	
	\begin{align}
	&\gamma  (E[{s}_i]-(1-p))	\ge -(1+\delta)(1-p) -\eta-\delta-\frac{1}{2}
	\end{align}
	
	\noindent and the claimed result now follows by rearranging and using the fact that  $\eta\rightarrow0$ as $k\rightarrow\infty$.
\end{proof}
\end{theorem}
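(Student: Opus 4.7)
The plan is to run a quadratic Lyapunov-drift argument on the receiver virtual queue. Step one is to bound the squared update: since $S_k\le\hat S_k=1-C_k$ and $X_k\in\{0,1\}$, a short case analysis on $C_k\in\{0,1\}$ shows $Q^r_{k+1}=[Q^r_k+S_kX_k-C_k(1-X_k)]^+\le[Q^r_k+\hat S_k-(1-X_k)]^+$, and using $([x]^+)^2\le x^2$ together with expanding the square gives
\begin{equation*}
(Q^r_{k+1})^2\le (Q^r_k)^2+2(\hat S_k-(1-X_k))Q^r_k+(\hat S_k-(1-X_k))^2.
\end{equation*}
Telescoping over $i=1,\dots,k$, dropping the non-negative $(Q^r_{k+1})^2$, and using $(\hat S_i-(1-X_i))^2\le 1$ delivers the sample-path drift inequality $\tfrac{1}{k}\sum_{i=1}^k(\hat S_i-(1-X_i))Q^r_i\ge -\eta-\tfrac12$ with $\eta:=(Q^r_1)^2/(2k)\to 0$.

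Step two is to reshape this into an inequality involving the rate $\hat s_k$. Adding and subtracting $\gamma(\hat s_k-(1-\bar x_k))$ inside the sum, the left-hand side becomes
\begin{equation*}
\tfrac{1}{k}\sum_{i=1}^k\bigl[(-\gamma+Q^r_i)\hat S_i-(1-X_i)(Q^r_i-\gamma)\bigr] + \gamma(\hat s_k-(1-\bar x_k)).
\end{equation*}
The pivotal observation is that policy $P$ chooses $\hat S_k\in\arg\min_{S\in\{0,1\}}(-\gamma+\hat Q^r_k)S$, so $(-\gamma+\hat Q^r_i)\hat S_i\le 0$ on every sample path. The assumption $|\hat Q^r_i-Q^r_i|\le\delta$ together with $\hat S_i\in\{0,1\}$ then transfers this to $(-\gamma+Q^r_i)\hat S_i\le\delta$, which controls the first summand in the average.

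Step three is to take expectations over the loss and arrival processes, using independence of $X_i$ from $Q^r_i$ to get $E[(1-X_i)(Q^r_i-\gamma)]=(1-p)(E[Q^r_i]-\gamma)$, obtaining
\begin{equation*}
\gamma\bigl(E[\hat s_k]-(1-p)\bigr)\ge\tfrac{1}{k}\sum_{i=1}^k(1-p)(E[Q^r_i]-\gamma)-\eta-\tfrac12-\delta.
\end{equation*}
To close the argument I would invoke Lemma~\ref{lem:two}, whose hypothesis matches the one assumed here, and combine it with the $\delta$ bound to deduce $E[Q^r_i]-\gamma\ge -1-\delta$; substituting, dividing through by $\gamma$, and letting $k\to\infty$ so that $\eta\to 0$ delivers the claimed bound. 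The main obstacle is the careful bookkeeping between $\hat Q^r_i$, on which the scheduler acts, and $Q^r_i$, which drives the Lyapunov drift---this is precisely where the prediction error $\delta$ enters and becomes the quantity that controls the capacity loss.
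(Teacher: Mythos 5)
Your proposal is correct and follows essentially the same route as the paper's own proof: the same quadratic Lyapunov telescoping on $Q^r_k$, the same add-and-subtract of $\gamma(\hat s_k-(1-\bar x_k))$, the same use of the scheduler's $\arg\min$ property transferred from $\hat Q^r_i$ to $Q^r_i$ via the $\delta$ bound, and the same appeal to Lemma~\ref{lem:two} to lower-bound $E[Q^r_i]-\gamma$. The only (minor) difference is that you justify the first domination $Q^r_{k+1}\le[Q^r_k+\hat S_k-(1-X_k)]^+$ by an explicit case analysis, which is slightly more careful than the paper's presentation of that step.
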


Theorem \ref{lem:three} says that as parameter $\gamma\rightarrow\infty$ the transmission slots $E[\hat{S}_k]$ available for sending information packets tends to the path capacity $1-p$.  That is, the transmission policy is capacity achieving as $\gamma\rightarrow\infty$.     The requirement that $E[{S}_k|\hat{Q}^r_{k}]\ge\epsilon>0$ when $\hat{Q}^r_{k}\le \gamma$ excludes transient arrival processes (for instance when packets arrive for a period of time and then no further arrivals happen) and is satisfied when, for example, the packet arrival process is ergodic and independent of the receiver queue.
\color{black}

\subsubsection{Estimating Rate Sub-Optimality For Small $\gamma$}

Lemma~\ref{lem:three} tells us that for $\gamma$ large enough our scheduler is achieving, even where the feedback delay, $d$, is greater than zero. However, this is not as comforting as it might seem at first sight since Theorem \ref{lem:one} also tells us that large $\gamma$ can lead to a large receiver queue and so large decoding delays.   By taking a different analysis approach, however, we can obtain fairly good estimates of the capacity loss induced by prediction errors when $\gamma$ is small.   These estimates indicate that the capacity loss is moderate.

Recall that under transmission policy $P$, when $\hat{Q}^r_k>\gamma$ then a coded packet is transmitted.   However, if $Q^r_k= 0$ then this coded packet is not useful, since there is no outstanding receiver queue i.e. no outstanding packet loss that bound benefit from the coded packet.   Defining random variable  $R_k$ which takes value 1 when $\hat{Q}^r_k>\gamma$ and  $Q^r_k=0$ and 0 otherwise then $\bar{r} = \frac{1}{K} \lim_{K\rightarrow\infty}\sum_{k=1}^{K}R_k$ is the transmission rate for redundant coded packets.   We would like to estimate $\bar{r}$.   

To proceed we make the following simplifying assumptions: (i) feedback is only received every $d$ slots, (ii) either an information packet or a coded packet is transmitted in every slot, (iii) $\gamma=0$,  and (iv) ${Q}^r_{k}=0$ for $k=id+1$, $i=0,1,\dots$. The assumptions mean that we have less knowledge of the decoder status and so expect the number of redundant packets transmitted to be larger i.e. we expect our estimate of $\overline{r}$ to be larger than the true value. Also, assumption (iv) implies that after $d$ slots we 

By assumption (i), at slots $k=id+1$, $i=0,1,\dots$ we perform update
\begin{align}
\hat{Q}^r_{k-d}&={Q}^r_{k-d}+\sum_{j=k-d}^{k-1}({S}_{j}p-C_j(1-p))\\
&\stackrel{(a)}{=}{Q}^r_{k-d}+dp-\sum_{j=k-d}^{k-1}C_j\le {Q}^r_{k-d}+dp
\end{align}
where in step $(a)$ we have used assumption (ii) that $S_j+C_j=1$.   By assumptions (iii) and (iv), over the next $d$ slots $\{k+1,\dots,k+d\}$ then at most $\lceil dp\rceil$ coded packets will be sent (fewer packets may be sent depending on the sample path $S_j$, $C_j$, $j\in\{k-d,\dots,k\}$ and when the threshold $\gamma>0$).  Letting $U_k=\sum_{j=k-d}^{k-1}{S}_{j}X_j$ denote the number of erased information packets then the number of redundant coded packets transmitted over slots $\{k+1,\dots,k+d\}$ is upper bounded by $\max\{0,\lceil{dp}\rceil-U_k\}$.     Letting $a_k=\sum_{j=k-d}^{k-1}{S}_{j}$ denote the number of information packets sent over slots $\{k-d,\dots,k\}$ then  $U_k$ is distributed as $\text{Prob}(U_k=u)=\mathcal{B} \left(d a_k, p, u\right)$, where $\mathcal{B}$ is the Binomial distribution.   Approximating $da_k$ by $\lfloor d\bar{a}\rfloor$ then an estimate of the number of redundant coded packets transmitted over interval $\{k+1,\dots,k+d\}$, normalised by the interval duration $d$, is
\begin{align}
\hat{r} = \sum_{u=0}^{\lceil{dp}\rceil } \frac{ \lceil{dp}\rceil  - u}{d} \mathcal{B} \left( \lfloor d \cdot \bar{a} \rfloor, p, u\right)\label{eq:estdummy}
\end{align}

Despite the assumptions made in deriving $(\ref{eq:estdummy})$, empirical tests indicate that the estimator is nevertheless quite accurate.   For example, Fig.~\ref{Fig: ProbDummyPackets} compares estimate $\hat{r}$ with the measured average number of redundant packets transmitted $\bar{r}$ as the feedback delay and loss rate $p$ is varied. It can be seen that $\hat{r}$ is essentially an upper bound on $\bar{r}$, and while it becomes less accurate as the loss rate $p$ increases it stays reasonably close to the true value.  Observe also that in Fig~\ref{Fig: ProbDummyPackets_c} the mean rate $a=0.6$ of packet arrivals is significantly less than the path capacity $1-p=0.9$ and so assumption (ii) (persistent queue backlog at the transmitter) is violated, nevertheless the estimate $\hat{r}$ remains accurate.

These results in Fig.~\ref{Fig: ProbDummyPackets} indicate that the capacity loss due to the transmission of redundant packets generally stays below $5\%$.   However, observe also that when the delay $d$ is less than the reciprocal of the loss rate $1/p$ then no redundant packets are sent i.e. $\bar{r}=0$.  This behaviour is accurately captured by $\hat{r}$ (since $\lceil dp \rceil=1$ when $d<1/p$ in (\ref{eq:estdummy})).   Hence, on links with lower loss larger feedback delays can be tolerated without incurring redundant packet transmissions e.g. for $p=0.01$ (a typical path loss rate in the Internet) feedback delays of up to 100 slots yield $\bar{r}=0$.

\begin{figure}
	\def \figurewidth {0.8\columnwidth}
	\def \figureheight {3.0cm}
	\subfloat[][$a=0.9$, $p=0.1$, $\gamma=1$]{\pgfplotstableread{figures/data/prob_dummy_packets_p01_a09.dat}{\dataset} 
\begin{tikzpicture} 
font = \scriptsize, 
\begin{axis}[scale only axis, 
width=\figurewidth,
height=\figureheight,
mark options={solid},
ymin=0,
ymax=0.1,
xmin=0,
xmax=100,
ytick={0,0.05,0.1},
yticklabels={0,0.05,0.1},
ylabel={rate($\hat{r},\overline{r}$)},
xlabel={$d$ (\# slots)},
compat=1.3,
legend style={legend pos=north east, draw=none, fill=none,legend columns=1, font = \footnotesize,legend cell align=left}
]

\addplot [dotted,line width=1.5pt]
plot[]
table[x index = 0, y index =1] from \dataset;
\addlegendentry{$\overline{r}$} 

\addplot [solid,line width=1.5pt]
plot[]
table[x index = 0, y index =2] from \dataset;
\addlegendentry{$\hat{r}$} 

\end{axis}

\end{tikzpicture}} \\
 	\subfloat[][$a=0.8$, $p=0.2$, $\gamma=1$]{\pgfplotstableread{figures/data/prob_dummy_packets_p02_a08.dat}{\dataset} 
\begin{tikzpicture} 
font = \scriptsize, 
\begin{axis}[scale only axis, 
width=\figurewidth,
height=\figureheight,
mark options={solid},
ymin=0,
ymax=0.2,
xmin=0,
xmax=100,
ytick={0,0.05,0.1,0.15,0.2},
yticklabels={0,0.05,0.1, 0.15, 0.2},
ylabel={rate($\hat{r},\overline{r}$)},
xlabel={$d$ (\# slots)},
compat=1.3,
legend style={legend pos=north east, draw=none, fill=none,legend columns=1, font = \footnotesize,legend cell align=left}
]

\addplot [dotted,line width=1.5pt]
plot[]
table[x index = 0, y index =1] from \dataset;
\addlegendentry{$\overline{r}$}

\addplot [solid,line width=1.5pt]
plot[]
table[x index = 0, y index =2] from \dataset;
\addlegendentry{$\hat{r}$}

\end{axis}

\end{tikzpicture}} \\
	\subfloat[][$a=0.6$, $p=0.1$, $\gamma=1$]{\pgfplotstableread{figures/data/prob_dummy_packets_p01_a06.dat}{\dataset} 
\begin{tikzpicture} 
font = \scriptsize, 
\begin{axis}[scale only axis, 
width=\figurewidth,
height=\figureheight,
mark options={solid},
ymin=0,
ymax=0.1,
xmin=0,
xmax=100,
ytick={0,0.05,0.1},
yticklabels={0,0.05,0.1},
ylabel={rate($\hat{r},\overline{r}$)},
xlabel={$d$ (\# slots)},
compat=1.3,
legend style={legend pos=north east, draw=none, fill=none,legend columns=1, font = \footnotesize,legend cell align=left}
]

\addplot [dotted,line width=1.5pt]
plot[]
table[x index = 0, y index =1] from \dataset;
\addlegendentry{$\overline{r}$} 

\addplot [solid,line width=1.5pt]
plot[]
table[x index = 0, y index =2] from \dataset;
\addlegendentry{$\hat{r}$} 

\end{axis}

\end{tikzpicture}\label{Fig: ProbDummyPackets_c}}
	\caption{Rate of transmitting dummy packets, simulation results and analytic estimate. }
	\label{Fig: ProbDummyPackets}
\end{figure}
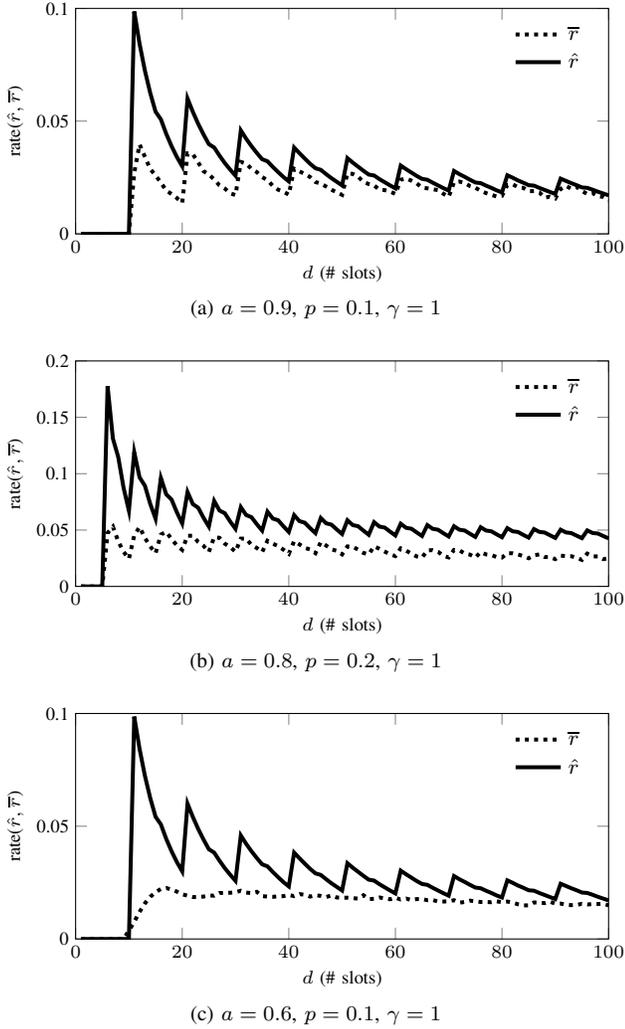



\subsection{Performance Evaluation}

We conclude this section by comparing the performance of the transmission policies introduced here with state-of-the-art alternatives, namely (i) ARQ, (ii) random linear block codes and (iii) the low delay code construction from~\cite{Karzan2017}. Note that the latter two are open-loop approaches, i.e. do not make use of feedback to trigger transmission of extra packets. Fig.~\ref{Fig: ComparisionDelayCodingPolicies2} plots the measured end-to-end delay Vs. the feedback delay for these schemes. As expected, for the block code and low delay coding schemes the end-to-end delay is constant, and does not vary with the feedback delay, also the end-to-end delay with the low delay code construction is around half of that for the block code (which is consistent with the results reported in~\cite{Karzan2017}). It can be seen that with ARQ the end-to-end delay increases linearly with the feedback delay $d$, and for delays greater than 40 slots the end-to-end delay with ARQ is larger than with any of the other approaches. However, for lower feedback delays the end-to-end delay with ARQ is lower than for the two open-loop coding schemes. The class of transmission policies introduced here provides a balance between ARQ and the low delay code construction.  Namely, when the feedback delay is low its end-to-end delay performance is similar to that of ARQ (which is known to be delay optimal when the feedback delay is zero) and as the feedback delay becomes large its end-to-end performance is similar to that of the low delay code construction.   For intermediate values of feedback delay, the proposed class of transmission policies offers lower end-to-end delay than any of the competing approaches.

\begin{figure}
	\def \figurewidth {0.8\columnwidth}
	\def \figureheight {3.0cm}
	\pgfplotstableread{figures/data/comparison_delayVsdelayLink_a07_p02.dat}{\dataset} 
\begin{tikzpicture}
font = \scriptsize, 
\begin{axis}[scale only axis, 
width=\figurewidth,
height=\figureheight,
mark options={solid},
ymin=0,
ymax=120,
xmin=0,
xmax=100,
ylabel={e2e delay (\# slots)},
xlabel={$d$ (\# slots)},
compat=1.3,
legend style={legend pos = north west, draw=none, fill=none , legend columns=2, font = \scriptsize}
]

\addplot [color=black, line width = 1.5pt, dashed]
plot [error bars/.cd, y dir = both, y explicit]
table[x index = 0, y index =1, y error plus index=2, y error minus index=2] from \dataset;
\addlegendentry{\emph{ARQ}}

\addplot [color=black, line width = 1.5pt]
plot [error bars/.cd, y dir = both, y explicit]
table[x index = 0, y index =3, y error plus index=4, y error minus index=4] from \dataset;
\addlegendentry{\emph{Block}}

\addplot [color=black!60, line width = 1.5pt]
plot [error bars/.cd, y dir = both, y explicit]
table[x index = 0, y index =5, y error plus index=6, y error minus index=6] from \dataset;
\addlegendentry{\emph{Low delay}}

\addplot [color=black!30, line width = 1.5pt]
plot [error bars/.cd, y dir = both, y explicit]
table[x index = 0, y index =7, y error plus index=8, y error minus index=8] from \dataset;
\addlegendentry{\emph{Proposal}}

\end{axis}

\end{tikzpicture}
	\caption{End-to-end delay vs feedback delay ($d$) for ARQ and various coding approaches. Configuration parameters are $a=0.7$, $p=0.2$, $k=50$ and $\gamma = 1$, block size $50$.  The data is for $10^4$ slots repeated $100$ times and the figure also shows the 95\% confidence intervals.}
	\label{Fig: ComparisionDelayCodingPolicies2}
\end{figure}
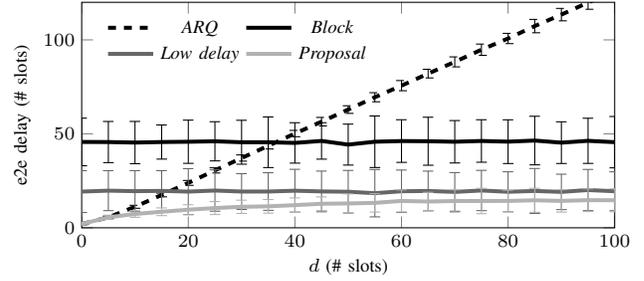
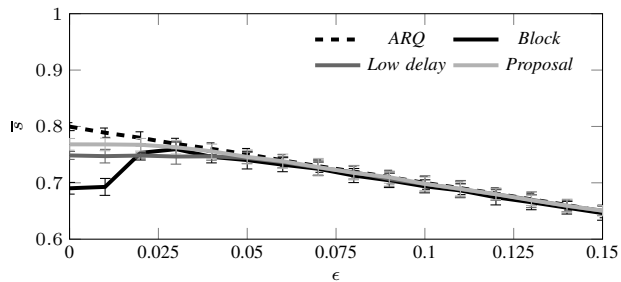
\begin{figure}
	\def \figurewidth {0.8\columnwidth}
	\def \figureheight {3.0cm}
	\pgfplotstableread{figures/data/comparison_perfromance_p02_d100.dat}{\dataset} 
\begin{tikzpicture}
font = \scriptsize, 
\begin{axis}[scale only axis, 
width=\figurewidth,
height=\figureheight,
mark options={solid},
ymin=0.6,
ymax=1,
xmin=0,
xmax=0.15,
xtick={0,0.025,0.05,0.075,0.1,0.125,0.15},
xticklabels={0,0.025,0.05,0.075,0.1,0.125,0.15},
ylabel={$\overline{s}$},
xlabel={$\epsilon$},
compat=1.3,
legend style={legend pos = north east, draw=none, fill=none , legend columns=2, font = \scriptsize}
]

\addplot [color=black, line width = 1.5pt, dashed]
plot [error bars/.cd, y dir = both, y explicit]
table[x index = 0, y index =1, y error plus index=2, y error minus index=2] from \dataset;
\addlegendentry{\emph{ARQ}}

\addplot [color=black, line width = 1.5pt]
plot [error bars/.cd, y dir = both, y explicit]
table[x index = 0, y index =3, y error plus index=4, y error minus index=4] from \dataset;
\addlegendentry{\emph{Block}}

\addplot [color=black!60, line width = 1.5pt]
plot [error bars/.cd, y dir = both, y explicit]
table[x index = 0, y index =5, y error plus index=6, y error minus index=6] from \dataset;
\addlegendentry{\emph{Low delay}}

\addplot [color=black!30, line width = 1.5pt]
plot [error bars/.cd, y dir = both, y explicit]
table[x index = 0, y index =7, y error plus index=8, y error minus index=8] from \dataset;
\addlegendentry{\emph{Proposal}}

\end{axis}

\end{tikzpicture}
	\caption{Achieved transmission rate $\bar{s}$ of information packets as the arrival rate approaches capacity, $\bar{a}=1-p-\epsilon$. Configuration parameters are, $p=0.2$, $k=50$, $d=100$, $\gamma = 1$, block size $50$. The data is for $10^4$ slots repeated $100$ times and the figure also shows the 95\% confidence intervals}
	\label{Fig: ComparisonPerformance}
\end{figure}

As noted above, the use of prediction introduces a trade-off between delay and rate, since prediction errors lead to transmission of redundant coded packets.   In comparison ARQ, which is purely reactive and involves no prediction, is capacity achieving but at the cost of increased end-to-end delay compared to when prediction is used (see Fig.~\ref{Fig: ComparisionDelayCodingPolicies2}).  The trade-off between delay and rate seems like a fundamental one since predictions allow lower delay to be achieved, but prediction errors are inevitable when losses are stochastic.   Fig.~\ref{Fig: ComparisonPerformance} explores this trade-off in more detail.  For a fixed packet loss rate $p$ this figure plots the achieved transmission rate $\bar{s}$ of information packets as the arrival rate $\bar{a}$ approaches capacity, namely $\bar{a}=1-p-\epsilon$ where $\epsilon$ is indicated on the x-axis of the plot. It can be seen that when $\epsilon \gg 0$, the achieved transmission rate equals the arrival rate for  all four schemes.  However, as the arrival rate approaches capacity ($\epsilon\rightarrow 0$) the achieved transmission rate falls below the arrival rate for all schemes apart from ARQ.  Since we use a fixed block size, the block code is not capacity achieving and this behaviour is to be expected.  Similarly, the low delay code construction incurs an overhead at the end of a connection.   Interestingly, observe that the achieved transmission rate with the transmission policy introduced here is higher than either of these schemes, that is the loss in capacity due to redundant coded transmissions is lower.

	\section{Generalising to Networks of Flows }

\edit{In this section we first make the observation that the policy $P$ introduced in Section \ref{subsec:transmissionpolicies} can also be seen as an approximate dual-subgradient update for an associated convex optimisation problem. This connection allows us to  exploit convex optimisation results to extend policy $P$ to networks with multiple flows sharing multiple lossy paths. We illustrate this using a simple multipath example.}

\subsection{Relating Policy $P$ with Convex Optimization}

Assuming, for simplicity, that the arrival queue $Q^t$ is persistently backlogged, then transmission policy $P$ is,
\begin{align}
&C_k \in \arg\min_{C\in\{0,1\}}  (-\hat{Q}^r_k +\gamma ) C\\
&Q^r_{k+1} = [Q^r_{k} + {S}_k \cdot X_k - C_k(1-X_k)]^+\\
&\hat{Q}^r_{k}={Q}^r_{k-d}+\sum_{j=k-d}^{k-1}({S}_{j}p-C_j(1-p))\\
&S_k=1-C_k
\end{align}
This is a natural threshold-based policy, namely a coded packet is sent whenever the virtual receiver queue is larger than $\gamma$, combined with use of predictor $\hat{Q}^r_{k}$ to mitigate the impact of the feedback delay $d$. 

Now, consider the convex optimisation problem $C$,	
\begin{align}
&\min_{s\in [0,1]}  - s \\
&s \le 1-p
\end{align}

\noindent where it will be helpful to think of $s$ as the average transmission rate of information packets.  Letting $c=1-s$ (which can be thought of as the average transmission rate of coded packets) the constraint $s \le 1-p$ ensures that $sp\le c(1-p)$, so enough coded packets are sent to recover from packet losses. This optimisation has the trivial solution $s=1-p$, but this is not where our interest lies. Rather, we focus on the relationship between this optimisation and policy $P$.  

Optimisation $C$ is convex and, provided $0\le p<1$, then the interior of the feasible set is non-empty, i.e. the Slater condition is satisfied and so strong duality holds. The Lagrangian is $L(s,\lambda):=- s+\lambda(s-(1-p))$ and the standard dual subgradient update for solving the optimisation is:
\begin{align}
&s_{k}\in \arg\min_{s\in [0,1]}  L(s,\lambda_k) \stackrel{(a)}{=}\arg\min_{s\in \{0,1\}}  (-1 +\lambda_k)s \\
&\lambda_{k+1}=[\lambda_k+\frac{1}{\gamma}(s_{k}-(1-p))]^+
\end{align}
\noindent where step size $1/\gamma>0$ and equality $(a)$ follows by dropping the terms in $L(s,\lambda)$ that do not depend on $s$ (and so do not affect the $s$ that minimises $L(s,\lambda)$), and noting that the solution must lie at an extreme point, i.e. $0$ or $1$.   

Defining $q_k=\gamma\lambda_k$, then this dual subgradient update can be rewritten equivalently as:
\begin{align}
&c_{k}\in \arg\min_{c\in \{0,1\}}  (-q_k+\gamma)c \\
&q_{k+1}=[q_{k}+s_{k}p-c_{k}(1-p)]^+\\
&s_{k}=1-c_k
\end{align}

The similarity of this update with transmission policy $P$ is immediately apparent, including the fact that $s_k$ and $c_k$ are $\{0,1\}$ valued.  However, it can be seen that there are also some important differences: (i) the average loss rate $p$ is used, rather than the loss rate process $\{X_k\}$; (ii) scaled multiplier $q_k$ is a real-valued quantity, whereas packet queue $Q_k$ is integer valued; (iii) feedback delay $d$ is ignored. Nevertheless, despite these differences, recent results on approximate convex optimisation in \cite{Valls2017} can be used to establish a strong connection between the update generated by policy $P$ and the optimal solution to problem $C$.   

Letting $\epsilon_k = (\hat{Q}^r_{k} -Q^r_{k})/\gamma$ and $\delta_k=({S}_k \cdot X_k - C_k(1-X_k)) - ({S}_kp  - C_k(1-p)) = X_k-p$, we can write transmission policy $P$ equivalently as:
\begin{align}
&C_k \in \arg\min_{C\in\{0,1\}}  (-\hat{Q}^r_k +\gamma ) C\label{eq:reform1}\\
&{Q}^r_{k+1} =[{Q}^r_k +{S}_kp  - C_k(1-p)+\delta_k]\\
&\hat{Q}^r_k = {Q}^r_k+\gamma\epsilon_k\\
&S_k=1-C_k\label{eq:reform2}
\end{align}
We now recall the following, which corresponds to \cite[Theorem 1]{Valls2017},  
\begin{theorem}
\label{th:perturbed_theorem}
Consider the convex optimisation: $\min_{x\in X} f(x)\quad s.t. \; g(x)+\delta\le0,\ j=1,\dots,m$ where $f:X\rightarrow R$, $g:X\rightarrow R^m$ are convex functions, $X$ is a bounded convex subset of $R^n$ and $\delta\in R^m$.   Let dual function $h(\lambda,\delta):=\inf_{x\in X}f(x)+\lambda^T(g(x)+\delta)$ and consider the update
\begin{align}
\lambda_{k+1} & =  [\lambda_{k} + \alpha \partial h(\mu_k, \delta_k) ]^+ \label{eq:th_l_udpate}
\end{align}
where $\mu_{k} = \lambda_k + \epsilon_k$ with  $\lambda_1 \in R^m_+$ and $\{ \epsilon_k \}$ a sequence of points from $R^m$ such that $\mu_k \succeq 0$ for all $k$.
Suppose the Slater condition is satisfied and that $ \delta_k $ is an ergodic stochastic process with expected value $\delta$ and $E(\| \delta_k - \delta \|_2^2) = \sigma_\delta^2$ for some finite $\sigma_\delta^2$.
Further, suppose that $ \frac{1}{k} \sum_{i=1}^k \| \epsilon_i \|_2 \le \epsilon$ for all $k$ and some $\epsilon \ge 0 $. 
Then, 
\begin{flalign*}
 \textup{(i)}  & \quad
\lim_{k \to \infty}| E ( f(\bar x_k) - f^\star(\delta) ) | 
\le  \frac{\alpha M}{2}  + 2 \epsilon \sigma_g  \\
 \textup{(ii)}  & \quad
 \lim_{k \to \infty}   E \left( g(\bar x_k) + \delta \right)  \preceq 0  \\
  \textup{(iii)}  & \quad E\left(
   \frac{1}{k} \sum_{i=1}^k \lambda_i \right)   \prec \infty \qquad k=1,2,\dots 
\end{flalign*}
where $\bar x_k = \frac{1}{k} \sum_{i=1}^k x_i$ and $M = \sigma^2_g + \sigma_\delta^2$.
\end{theorem}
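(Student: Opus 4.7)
The plan is to adapt the classical dual subgradient convergence argument to accommodate the two perturbations appearing in (\ref{eq:th_l_udpate}): the multiplier shift $\epsilon_k$, which means the subgradient is evaluated at $\mu_k=\lambda_k+\epsilon_k$ rather than at $\lambda_k$, and the stochastic perturbation $\delta_k$ of the constraint. Let $\lambda^\star$ denote an optimal dual multiplier for the nominal problem $\min f(x)$ s.t.\ $g(x)+\delta\preceq 0$, which exists thanks to Slater and strong duality. Using nonexpansiveness of the projection onto $\mathbb{R}^m_+$, I would first derive the standard descent inequality
\begin{equation*}
\|\lambda_{k+1}-\lambda^\star\|_2^2 \le \|\lambda_k-\lambda^\star\|_2^2 + 2\alpha(\lambda_k-\lambda^\star)^T(g(x_k)+\delta_k) + \alpha^2\|g(x_k)+\delta_k\|_2^2,
\end{equation*}
where $x_k\in\arg\min_{x\in X} f(x)+\mu_k^T(g(x)+\delta_k)$, so that $g(x_k)+\delta_k\in\partial h(\mu_k,\delta_k)$.

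Next I would split the cross term as $(\lambda_k-\lambda^\star)^T(g(x_k)+\delta_k)=(\mu_k-\lambda^\star)^T(g(x_k)+\delta_k)-\epsilon_k^T(g(x_k)+\delta_k)$. Weak duality applied at $\mu_k$ gives $h(\mu_k,\delta_k)=f(x_k)+\mu_k^T(g(x_k)+\delta_k)\le f^\star(\delta_k)$, which, after adding and subtracting $f^\star(\delta)$ and using the Lipschitz continuity of $f^\star(\cdot)$ in its perturbation argument, relates the first piece to $f(x_k)-f^\star(\delta)$ modulo an error governed by $E[\|\delta_k-\delta\|_2^2]=\sigma_\delta^2$. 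The Cauchy--Schwarz inequality combined with the averaged bound $\frac{1}{k}\sum_{i=1}^k\|\epsilon_i\|_2\le\epsilon$ and the uniform bound $\|g(x)\|_2\le\sigma_g$ on the compact set $X$ controls the $\epsilon_k$-contribution by $2\epsilon\sigma_g$, while the quadratic remainder $\alpha^2\|g(x_k)+\delta_k\|_2^2$ is absorbed into the $\alpha M/2$ term with $M=\sigma_g^2+\sigma_\delta^2$.

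Then I would telescope the inequality from $1$ to $k$, divide by $k$, and take expectations, using Jensen's inequality together with convexity of $f$ and $g$ to replace the averages $\frac{1}{k}\sum f(x_i)$ and $\frac{1}{k}\sum g(x_i)$ by $f(\bar x_k)$ and $g(\bar x_k)$. The initial term $\|\lambda_1-\lambda^\star\|_2^2/(2\alpha k)$ vanishes as $k\to\infty$, which delivers the upper half of (i) and the feasibility claim (ii). For the matching lower bound in (i), I would use the approximate feasibility of $\bar x_k$ established in (ii) together with convexity of $f$ to argue that $E[f(\bar x_k)]\ge f^\star(\delta)-O(\epsilon\sigma_g+\alpha M)$ via a standard Lagrangian lower-bounding argument at $\lambda^\star$. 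For (iii) I would invoke Slater: picking $\hat x\in X$ with $g(\hat x)+\delta\preceq-\eta\mathbf{1}$ for some $\eta>0$, the inequality $h(\mu_k,\delta_k)\le f(\hat x)+\mu_k^T(g(\hat x)+\delta_k)$ combined with the lower bound on the left-hand side of the telescoped descent inequality forces $E\bigl[\frac{1}{k}\sum_i\lambda_i\bigr]$ to stay bounded uniformly in $k$.

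The main obstacle will be handling the interaction between the two perturbations in the second step: applying the subgradient inequality at $\mu_k$ rather than $\lambda_k$ requires careful bookkeeping of the sign and magnitude of $\epsilon_k^T(g(x_k)+\delta_k)$, and since only an \emph{averaged} norm hypothesis on $\epsilon_k$ is available (not a pointwise bound), a Cauchy--Schwarz argument forces a uniform control of the second moment of $g(x_k)+\delta_k$. This is precisely where compactness of $X$ (yielding $\sigma_g$) and the variance assumption on $\delta_k$ (yielding $\sigma_\delta$) have to combine cleanly so that the residual error from averaging collapses to exactly the advertised $\alpha M/2+2\epsilon\sigma_g$ bound in part (i), without degrading as the iteration count grows.
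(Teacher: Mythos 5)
The paper offers no proof of this theorem: it is quoted verbatim from \cite[Theorem 1]{Valls2017} and used as a black box, so there is no in-paper argument to compare against. Your sketch follows the standard perturbed dual-subgradient analysis that underlies the cited result --- nonexpansive projection giving the descent inequality, splitting the cross term to account for the subgradient being evaluated at $\mu_k=\lambda_k+\epsilon_k$ rather than $\lambda_k$, telescoping and Jensen for (i)--(ii), and a Slater-point argument for (iii) --- and this is the right route. The only points requiring care, which you already flag, are that $\sigma_g$ (left undefined in the statement as reproduced here) must be read as a uniform bound on $\|g(x)\|_2$ over the compact set $X$, and that because only an averaged bound on $\|\epsilon_k\|_2$ is assumed, the Cauchy--Schwarz step needs a uniform moment bound on $g(x_k)+\delta_k$, which compactness of $X$ together with the variance assumption on $\delta_k$ supplies; the precise constants $\tfrac{\alpha M}{2}$ and $2\epsilon\sigma_g$ then depend on these definitional conventions rather than on any additional idea.
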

Applying this to optimisation $C$ and identifying (\ref{eq:reform1})-(\ref{eq:reform2}) as the perturbed update (\ref{eq:th_l_udpate}), we obtain the following:
\begin{lemma}\label{lem:convex}
Suppose loss process $\{X_k\}$ is ergodic. Then, under policy $P$, we have $\lim_{k\rightarrow\infty}|E[\bar{S}_k]-(1-p)| \le \frac{1+2d}{\gamma}$, where $\bar{S}_k:=\frac{1}{k}\sum_{i=1}^k {S}_i$, and $E\left(\frac{1}{k} \sum_{i=1}^k Q^r_i \right)<\infty$ for all $k$.
\end{lemma}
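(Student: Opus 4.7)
The plan is to apply Theorem~\ref{th:perturbed_theorem} directly to optimisation $C$, exploiting the reformulation (\ref{eq:reform1})--(\ref{eq:reform2}) already provided in the excerpt, which exhibits policy $P$ as a perturbed dual-subgradient iteration with step size $\alpha=1/\gamma$, multiplier $\lambda_k=Q^r_k/\gamma$, perturbed multiplier $\mu_k=\hat{Q}^r_k/\gamma$, constraint perturbation $\delta_k=X_k-p$, and multiplier perturbation $\epsilon_k=(\hat{Q}^r_k-Q^r_k)/\gamma$. The primal variable $s$ plays the role of $x$ and the extreme points $S_k\in\{0,1\}$ are exactly what (\ref{eq:reform1}) selects, so $\bar{x}_k$ in the theorem corresponds to $\bar{S}_k$ here.

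First I would discharge the hypotheses of the theorem. Slater's condition is immediate since for $0\le p<1$ the interior of $\{s\in[0,1]:s\le 1-p\}$ is non-empty. Ergodicity of $\{X_k\}$ transfers directly to $\{\delta_k\}$, with $E[\delta_k]=0$ and $\sigma_\delta^2=p(1-p)\le 1/4$. The key ingredient is a uniform bound on the estimator error $|\hat{Q}^r_k-Q^r_k|$, which is exactly what the worst-case bound (\ref{eq:upperlimit}) supplies: $|\epsilon_k|\le 2d\max\{p,1-p\}/\gamma\le 2d/\gamma$ holds sample-path-wise, so $\frac{1}{k}\sum_{i=1}^k\|\epsilon_i\|_2\le 2d/\gamma=:\epsilon$ for every $k$. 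Finally, $g(s)=s-(1-p)\in[-(1-p),p]$ on $[0,1]$, so $\sigma_g\le 1$.

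Applying conclusion (i) of Theorem~\ref{th:perturbed_theorem} with $f(s)=-s$ and optimal value $f^\star(0)=-(1-p)$ yields $\lim_{k\to\infty}|E[\bar{S}_k]-(1-p)|\le \alpha M/2+2\epsilon\sigma_g$, which after inserting the constants above has the form (constant + $d\cdot$constant)$/\gamma$; bookkeeping on $\sigma_g$, $\sigma_\delta$ and the factor in (\ref{eq:upperlimit}) collapses this to the advertised $(1+2d)/\gamma$. Boundedness of $E[\frac{1}{k}\sum_{i=1}^kQ^r_i]$ follows immediately from conclusion (iii) applied to $\lambda_i=Q^r_i/\gamma$.

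The main obstacle I anticipate is the non-negativity requirement $\mu_k\succeq 0$ in Theorem~\ref{th:perturbed_theorem}: the predictor (\ref{eq:est}) is not itself a queue and can transiently take negative values when the running prediction of coded transmissions exceeds $Q^r_{k-d}$. The cleanest remedy is to replace $\hat{Q}^r_k$ by $\max\{\hat{Q}^r_k,0\}$ in the scheduling rule, which only tightens the error bound since $Q^r_k\ge 0$; alternatively one can observe that the selection step (\ref{eq:reform1}) depends only on the sign of $\hat{Q}^r_k-\gamma$ with $\gamma>0$ and is thus insensitive to such a projection, so the hypotheses of the theorem are met after an inessential modification of the estimator.
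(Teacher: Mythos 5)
Your overall strategy is exactly the paper's: identify (\ref{eq:reform1})--(\ref{eq:reform2}) as the perturbed dual-subgradient update of Theorem~\ref{th:perturbed_theorem} with $\alpha=1/\gamma$, $\epsilon_k=(\hat{Q}^r_k-Q^r_k)/\gamma$, $\delta_k=X_k-p$, verify the hypotheses, and read off conclusion (i) for the rate bound and conclusion (iii) for $E\bigl(\frac{1}{k}\sum_{i=1}^k Q^r_i\bigr)<\infty$. The one place where your argument does not close is the final constant. You take the estimator error from the worst-case bound (\ref{eq:upperlimit}), giving $|\epsilon_k|\le 2d\max\{p,1-p\}/\gamma$ and hence $\epsilon=2d/\gamma$; with $\sigma_g\le 1$ this makes the term $2\epsilon\sigma_g = 4d/\gamma$, and even with your sharper $\sigma_\delta^2=p(1-p)$ the total is of the form $(c+4d)/\gamma$, not the advertised $(1+2d)/\gamma$. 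The sentence ``bookkeeping \ldots collapses this to the advertised $(1+2d)/\gamma$'' is precisely the step that is not substantiated. The paper instead uses the direct bound $|\hat{Q}^r_k-Q^r_k|\le d$ (obtained by comparing the accumulated increments: the discrepancy is $\sum_{j=k-d}^{k-1}(S_j+C_j)(X_j-p)$, each term of which lies in $[-p,1-p]$, so the sum lies in $[-dp,\,d(1-p)]$ and has magnitude at most $d$), giving $\epsilon=d/\gamma$ and $2\epsilon\sigma_g\le 2d/\gamma$, which together with $\alpha M/2\le 1/\gamma$ (using $M=\sigma_g^2+\sigma_\delta^2\le 2$) yields $(1+2d)/\gamma$. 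Note that invoking Lemma~\ref{th:skorokhodcontinuity} here, as you implicitly do via (\ref{eq:upperlimit}), costs you the extra factor of~$2$ that accounts for the $[\cdot]^+$ clipping; to match the stated constant you need the tighter direct comparison.

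Your observation about the hypothesis $\mu_k\succeq 0$ is well taken: the predictor $\hat{Q}^r_k$ can indeed go negative and the paper's proof is silent on this, so your proposed projection $\max\{\hat{Q}^r_k,0\}$ (which leaves the threshold decision unchanged for $\gamma>0$ and only shrinks $|\epsilon_k|$) is a legitimate repair that goes beyond what the paper itself writes down.
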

\begin{proof}
To apply Theorem \ref{th:perturbed_theorem} we need to show for $\epsilon_k = (\hat{Q}^r_{k} -Q^r_{k})/\gamma$ and $\delta_k= X_k-p$ that: (i)$ \frac{1}{k} \sum_{i=1}^k \| \epsilon_i \|_2 \le \epsilon$; and (ii) $\delta_k$ has finite variance $\sigma_\delta^2\le 1$. Observing that $\gamma|\hat{Q}^r_{k} -Q^r_{k}|\le \gamma d$, then (i) follows immediately with $\epsilon= d/\gamma$. Since $X_k\in\{0,1\}$ then $|\delta_k|\le 1$ and it follows immediately that $\delta_k$ has finite variance $\sigma_\delta^2\le 1$.
\end{proof}
Observe that the bound on $E[\bar{S}_k]$ in Lemma \ref{lem:convex} is not particularly useful when $\gamma$ is small, nor the bound on $E\left(\frac{1}{k} \sum_{i=1}^k Q^r_i \right)$. We nonetheless previously showed that much tighter bounds can be derived for policy $P$. On the other hand, the approach used to derive Lemma \ref{lem:convex} is indeed rather interesting, because it can be used to show that transmission policy $P$ can be embedded as a building block within solution updates for general convex optimisation problems, and not only will behave sensibly, but can be analysed via Theorem \ref{th:perturbed_theorem} and related results from the area of approximate convex optimisation. We illustrate this in more detail in the next section using multipath communications as an illustrative example.

%
%

\subsection{Example: Multipath Transmission}


\edit{
Consider the following lossy network multi-commodity flow setup, which is a variant of the setup in \cite{lossy09}.  Let $G=(V,E)$ denote a graph with vertices $V$ and edges $E\subset V\times V$.  Time is slotted, edges have unit capacity and $p_e$ denotes the packet loss rate on edge $e$, with losses being i.i.d. across slots.  The network carries a set $F\subset V\times V$ of flows, with flow $f=(s,d)\in F$ having source/transmitter $s$ and destination/receiver $d$.  Each source $s$ has a single destination\footnote{\edit{It may be possible to generalise this unicast setup to multicast, but we leave this as future work.}} $d$, but multiple paths may be used to transmit packets from each source to the corresponding destination.  

Let $P_f\subset 2^E$ denote the set of usable paths between source $s$ and destination $d$.  In general $P_f$ will be a subset of all possible paths from $s$ to $d$, determined by delay requirements, routing protocols \emph{etc}.  We assume paths in $P_f$ have no loops and that the time taken to send a packet from source $s$ to destination $d$ is the same\footnote{\edit{The assumption that the delay on each path is the same can be readily relaxed. In the analysis it is used to allow transmission events (e.g. $S_k$ in the notation of the earlier part of this paper) to be referred to the receiver, which effectively lumps the forward and reverse path delays into the feedback delay.  When there are multiple paths with differing but known delays then this can still be achieved by using an earliest-deadline first policy to select which packet to send on a path when a transmission is made.}} for all paths in $P_f$.  Let $g_{i,e}$ denote the number of hops along path $i$ between the source and edge $e$, with  $g_{i,e}=\infty$ for edges not on path $i$.  Associate with path $i\in P_f$ the vector $a_{f,i}\in\mathbb{R}_+^{|E|}$ with element corresponding to edge $e$ equal to $1/\Pi_{e'\in E:g_{i,e'}<g_{i,e}} (1-p_{e'})$ when $e$ lies on path $i$ and otherwise set equal to $0$.   The elements of $a_{f,i}$ capture the accumulated packet loss along path $i$.  Let $r_{f,i}\in\mathbb{R}_+$ denote the rate at which packets are sent by flow $f$ along path $i\in P_f$.  Gathering the vectors $a_e$ into matrix $A\in\mathbb{R}_+^{|E|\times n}$ where $n:={\sum_{f\in F}|P_f|}$ is the number of network paths and rates $r_{f,i}$ into vector $r\in\mathbb{R}_+^{n}$ then for feasibility the flow rates must satisfy network capacity constraint
\begin{align}
    Ar\le 1
\end{align}
where $1\in\mathbb{R}^{|E|}$ denotes the vector with all elements 1 and the inequality being interpreted element-wise.

Now consider the optimisation
\begin{align}
    &\quad\min_{r\in\mathbb{R}_+^{n}: Ar\le 1} -\sum_{f\in F} s_f \label{eq:o1}\\
    &\text{s.t. } s_f \le \sum_{i\in P_f}(1-p_i)r_{f,i}\label{eq:con}
\end{align}
where $p_i:=1-\Pr_{e\in i}(1-p_e)$ is the aggregate loss rate along path $i$.   Think of $s_f$ as the aggregate rate at which flow $f$ sends information packets.  Letting $s_{f,i}$ denote the fraction of information packets sent by flow $f$ along path $i$ then $s_f=\sum_{i\in P_f}s_{f,i}$ and $c_{f,i}=r_{f,i}-s_{f,i}$ is the rate at which coded packets are sent along path $i$.   Constraint (\ref{eq:con}) ensures that
\begin{align}
\sum_{i\in P_f} p_{i}s_{f,i} \le \sum_{i\in P_f} (1-p_i)c_{f,i}
\end{align}
i.e. enough coded packets are sent to recover from packet losses.

The Lagrangian is $L(s,r,\lambda)= -\sum_{f\in F} s_f +\sum_{f\in F} \lambda_f(s_f - \sum_{i\in P_f}(1-p_{i})r_{f,i})$, and the Frank-Wolfe variant of the dual subgradient update is:

\begin{align}
&r_{k+1}\in \arg\min_{r\in\mathbb{R}_+^{n}:Ar\le 1}  \sum_{f\in F}\sum_{i\in P_f}(-Q_{f,k}(1-p_{i}))r_{f,i}\\
&s_{f,k+1}\in \arg\min_{0\le s\le \sum_{i\in P_f}r_{f,i}}  (-1/\alpha +Q_{f,k})s \\
&Q_{f,k+1}=[Q_{f,k}+s_{f,k}-\sum_{i\in P_f}(1-p_{i})r_{f,i}]^+
\end{align}
\noindent where $Q_{f,k}=\lambda_{f,k}/\alpha$ and $\alpha>0$ is a design parameter.  Since $s_{f,k+1}$ is the solution of a linear programme its value lies at an extreme point and so the updates can be equivalently replaced by $s_{f,k+1}\in \arg\min_{s\in \{0,\sum_{i\in P_f}r_{f,i}\}}  (-1/\alpha +Q_{f,k})s$.  Similarly, since $r_{k+1}$ is the solution of a linear programme is an extreme point of set $\{r\in \mathbb{R}_+^{n}:Ar\le 1\}$.   When $A$ is unimodular then the extreme points (and so $r_{k+1}$) are integer-valued.  More generally, we can always use randomised time-sharing to select a vector $R_{k+1}$ with $\{0,1\}$ valued elements such that $E[R_{k+1}]=r_{k+1}$.

Replacing $Q_{f,k}$ with virtual receiver queue $Q^r_{f,k}$ yields the update:

\begin{align}
&r_{k+1}\in \arg\min_{r\in\mathbb{R}_+^{n}:Ar\le 1}  \sum_{f\in F}\sum_{i\in P_f}(-Q_{f,k}(1-p_{i}))r_{f,i}\label{eq:r1}\\
&\text{Select }R_{k+1}\in\{0,1\}^{|E|}\text{ s.t. }E[R_{k+1}]=r_{k+1} \label{eq:r2}\\
&\hat{S}_{f,k+1}\in \arg\min_{S\in \{0,R_{f,k}\}}  (-1/\alpha +Q^r_{f,k})S \\
&Q^r_{f,k+1}=[Q^r_{f,k}+\hat{S}_{f,k}-\sum_{i\in P_f}(1-X_{i,k})R_{f,i}]^+\notag\\
&=[Q^r_{f,k}+\sum_{i\in P_f}\hat{S}_{f,i,k}X_{i,k}-\sum_{i\in P_f}(1-X_{i,k})C_{f,i,k}]^+
\end{align}
where $\hat{S}_f=\sum_{i\in P_f}\hat{S}_{f,i}$, $C_{f,i}=R_{f,i}-\hat{S}_{f,i}$ and $\{X_{i,k}\}$ are i.i.d random variables with $E[X_{i,k}]=p_i$ and with $X_{i,j}$ taking value 1 when a packet is erased on path $i$ in slot $k$ and $0$ otherwise.  By Theorem \ref{th:perturbed_theorem} this update converges to a ball around the solution of optimisation (\ref{eq:o1})-(\ref{eq:con}), with the size of the ball decreasing as scaling/step-size parameter $\alpha$ is decreased.

We can map this update onto the following physical setup. $\hat{S}_{f,k}$ is the number of information packets from flow $f$ to be transmitted in slot $k$. Note that $\hat{S}_{f,k+1}$ might take a value greater than 1, if multiple link transmit slots are available to flow $f$. Selecting $R_{f,i,k}=1$ corresponds to allocating transmission slot $k$ on link $i$ to a packet from flow $f$. When $\hat{S}_{f,k}=0$ (there is not an information packet to be sent) a coded packet is transmitted, $C_{f,i,k}=1$). The occupancy of Virtual receiver queue $Q^r_{f,k+1}$ increases when an information packet is lost, and decreases upon receiving a coded packet.  $\hat{S}_{f,k+1}$ is selected according to a threshold rule, namely non-zero when $Q^r_{f,k}<1/\alpha$ and a transmission slot is available (i.e. when $\sum_{i\in P_f} R_{f,i}>0$). When there is feedback delay we can replace $Q^r_{f,k}$ in this threshold rule with prediction $\hat{Q}^r_{f,k}$.

\begin{figure}
    \centering
    \begin{tikzpicture}

\node[rotate=12,scale=0.2,cloud, cloud puffs=9, aspect=2,cloud puff arc=120, minimum width=65 em, minimum height=52 em, align=center, draw, black!70,line width=0.03em] (operator) at (6.1,5.5) {};

\node (server) at (9,6) {\includegraphics[width=2.6em]{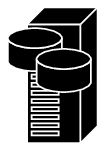}};
\node (pc) at (3,5) {\includegraphics[width=2.8em]{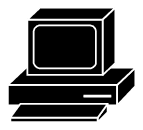}};

\node[anchor=north] at (server.south)  {$d$};
\node[anchor=north] at (pc.south)  {$s$};

\draw[thick] (pc.40) to[out=20,in=160] node[sloped,above,pos=0.4] {Path 1} (server.130);
\draw[thick] (pc.10) to[out=350,in=190] node[sloped,above,pos=0.5] {Path 2} (server.180);
\draw[thick] (pc.340) to[out=330,in=210] node[sloped,above,pos=0.6] {Path 3} (server.230);

\end{tikzpicture}
    \caption{\edit{Example multipath topology with three paths shared by three flows, each flow having the same source and destination.}}
    \label{fig:topo}
\end{figure}

We illustrate the application of this update to the simple multipath topology shown in Fig. \ref{fig:topo} which has three paths between source $s$ and destination $d$.  These paths are shared by three flows.  In this case update (\ref{eq:r1})-(\ref{eq:r2}) simplifies to element $f$ of $R_{i,k+1}$ taking value 1 (corresponding to transmitting a packet from flow $f$ on link $i$ in slot $k+1$) when the receiver backlog $Q^r_{f,k}(1-p_{i})$ for flow $f$ is the largest one amongst the three flows.  

Fig.~\ref{Fig: MultipathPerformance} shows the performance obtained as we vary the packet loss rate from 0.0 to 0.4 over the first path, and we keep the other two with a fixed erasure probability of $0.1$. Note that flows sharing the same path see the same loss probability. Fig.~\ref{Fig: MultipathPerformancea} shows the individual rates for each flow and it can be seen, the available capacity is equally shared between the flows.  Fig.~\ref{Fig: MultipathPerformanceb} shows the aggregate rate, which is obtained by summing the individual flow throughputs. It can be seen that the rate of the proposed multipath scheduler almost reaches the system capacity. 
}

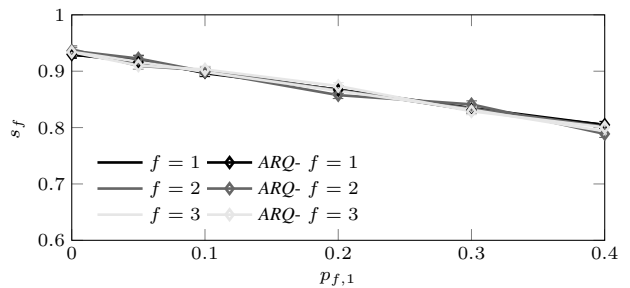
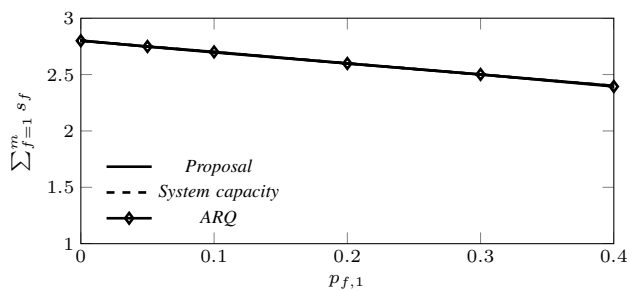
\begin{figure}
	\def \figurewidth {0.8\columnwidth}
	\def \figureheight {3.0cm}
	\subfloat[][Performance per flow]{\label{Fig: MultipathPerformancea}\pgfplotstableread{figures/data/performance_multipath_3.dat}{\dataset} 
\pgfplotstableread{figures/data/performance_multipath_arq_3.dat}{\datasetA} 
\begin{tikzpicture} 
font = \scriptsize,

\begin{axis}[scale only axis, 
width=\figurewidth,
height=\figureheight,
mark options={solid},
ymin=0.6,
ymax=1,
xmin=0,
xmax=0.4,
ylabel={$s_f$},
xlabel={$p_{f,1}$},
compat=1.3,
legend style={legend pos = south west, draw=none, fill=none , legend columns=2, font = \scriptsize}
]

\addplot [color=black, line width=1pt]
plot [error bars/.cd, y dir = both, y explicit]
table[x index = 0, y index =1, y error plus index=2, y error minus index=2] from \dataset;
\addlegendentry{$f=1$}

\addplot [color=black,mark=diamond, line width=1pt]
plot [error bars/.cd, y dir = both, y explicit]
table[x index = 0, y index =1, y error plus index=2, y error minus index=2] from \datasetA;
\addlegendentry{\emph{ARQ}- $f=1$}

\addplot [color=black!60, line width=1pt]
plot [error bars/.cd, y dir = both, y explicit]
table[x index = 0, y index =3, y error plus index=4, y error minus index=4] from \dataset;
\addlegendentry{$f=2$}

\addplot [color=black!60,mark=diamond, line width=1pt]
plot [error bars/.cd, y dir = both, y explicit]
table[x index = 0, y index =3, y error plus index=4, y error minus index=4] from \datasetA;
\addlegendentry{\emph{ARQ}- $f=2$}

\addplot [color=black!10, line width=1pt]
plot [error bars/.cd, y dir = both, y explicit]
table[x index = 0, y index =5, y error plus index=6, y error minus index=6] from \dataset;
\addlegendentry{$f=3$}

\addplot [color=black!10,mark=diamond, line width=1pt]
plot [error bars/.cd, y dir = both, y explicit]
table[x index = 0, y index =5, y error plus index=6, y error minus index=6] from \datasetA;
\addlegendentry{\emph{ARQ}- $f=3$}

\end{axis}

\end{tikzpicture}}\\
	\subfloat[][Aggregated System performance]{\label{Fig: MultipathPerformanceb}\pgfplotstableread{figures/data/performance_multipath_3.dat}{\dataset} 
\pgfplotstableread{figures/data/performance_multipath_arq_3.dat}{\datasetA} 

\begin{tikzpicture} 
font = \scriptsize,

\begin{axis}[scale only axis, 
width=\figurewidth,
height=\figureheight,
mark options={solid},
ymin=1.0,
ymax=3.0,
xmin=0,
xmax=0.4,
ylabel={$\sum_{f=1}^m s_f$},
xlabel={$p_{f,1}$},
compat=1.3,
legend style={legend pos = south west, draw=none, fill=none , legend columns=1, font = \scriptsize}
]

\addplot [color=black, line width=1pt]
plot [error bars/.cd, y dir = both, y explicit]
table[x index = 0, y index =7, y error plus index=8, y error minus index=8] from \dataset;
\addlegendentry{\emph{Proposal}}

\addplot [dashed,color=black, line width=1pt]
plot[]
table[x index = 0, y index =9] from \dataset;
\addlegendentry{\emph{System capacity}}

\addplot [color=black, mark=diamond ,line width=1pt]
plot [error bars/.cd, y dir = both, y explicit]
table[x index = 0, y index =7, y error plus index=8, y error minus index=8] from \datasetA;
\addlegendentry{\emph{ARQ}}

\end{axis}

\end{tikzpicture}}
	\caption{Performance of multipath system, with 3 information flows over 3 different paths. Feedback delay $d=10$.  The results are averaged after running the experiment during $10^5$ slots and the figure also shows the 95\% confidence intervals}
	\label{Fig: MultipathPerformance}
\end{figure} 

We now compare the application end-to-end delay of our proposed scheme, with that exhibited by a legacy solution (ARQ scheme) when the feedback delay increases. The obtained results are shown in Fig. \ref{Fig: MultipathDelay}. In this case, we exploit the prediction of the queue at each flow: $\hat{Q}^r_{k,f} = Q^r_{k-d,f} + \sum^{k-1}_{j=k-d}\left( \sum_{i=1}^{n}\left(S_{k,f,i}p_{f,i} - C_{k,f,i}(1-p_{f,i}) \right) \right)$, as was done for the single link scenario. We still use three links and three different paths ($m=n=3$), but now fix the packet loss rate to be $0.2$ over all paths (recall that all flows are equally affected by such erasures). We also assume an arrival rate of $a = \frac{3}{4}$, again for the three flows. The traditional ARQ scheme assumes that the scheduler uses a \emph{Round-Robin} approach to distribute flow transmissions across the paths. It can be seen that the behaviour is much the same as that observed over a single path and, in particular, that as the feedback delay increases, the proposed scheme clearly outperforms ARQ, yielding much lower delays.  

\begin{figure}
	\def \figurewidth {0.8\columnwidth}
	\def \figureheight {3.0cm}
	\pgfplotstableread{figures/data/delay_multipath_3.dat}{\dataset} 

\begin{tikzpicture} 
font = \scriptsize,

\begin{axis}[scale only axis, 
width=\figurewidth,
height=\figureheight,
mark options={solid},
ymin=0,
ymax=250,
xmin=0,
xmax=100,
ylabel={e2e delay (\# slots)},
xlabel={$d$ (\# slots)},
compat=1.3,
legend style={legend pos = north west, draw=none, fill=none , legend columns=1, font = \scriptsize}
]

\addplot [color=black, line width=1pt]
plot [error bars/.cd, y dir = both, y explicit]
table[x index = 0, y index =1, y error plus index=2, y error minus index=2] from \dataset;
\addlegendentry{\emph{Proposal}}

\addplot [dashed,color=black, line width=1pt]
plot [error bars/.cd, y dir = both, y explicit]
table[x index = 0, y index =3, y error plus index=4, y error minus index=4] from \dataset;
\addlegendentry{\emph{ARQ}}

\end{axis}

\end{tikzpicture}
	\caption{Application (e2e) delay Vs. feedback delay ($D_f$) for an ARQ scheme and our proposal. Configuration parameters are arrival rate $a=0.7$, loss rate $p=0.2$ for all paths and $\alpha=1$. The results are averaged after running the experiment during $10^5$ slots and the figure also shows the 95\% confidence intervals}
	\label{Fig: MultipathDelay}
\end{figure}
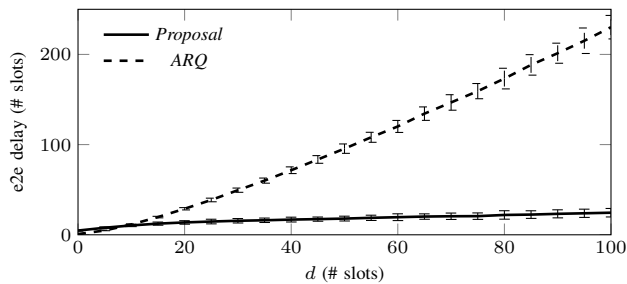

	\section{Conclusions}

In this paper we have proposed a joint coding/scheduling scheme to be used over packet erasure paths. We have posed an optimization problem, which is solved by means of discrete decisions, and the source node can decide: to send a native packet, to transmit a coded packet, or to do nothing. We have shown that this discrete decision method yields an optimum behavior, ensuring in addition system stability. We have assessed the validity of the model, by means of an extensive simulation-based analysis, in which we have considered the impact of having delayed feedback.

Knowing the status of the decoder after some delay has been usually overlooked when studying the performance of coding solutions. For ideal feedback channels it is well known that ARQ yields the best performance. However, under realistic situations, the obtained results have shown that the joint coder/scheduler clearly outperforms legacy solutions. The proposed approach shows the same throughput as the one seen for the ARQ case, while it does not increase the end-to-end delay.

We have also proposed some practical bounds for the corresponding queue lengths, which were afterwards used to analyze the overhead caused by the transmission of unneeded (dummy) packets. The simulation results show that they are indeed rather tight, and that the proposed predictor for the queue occupancy behaves quite accurately. Hence, they can be exploited to take better coding/scheduling decisions in different setups. Last, we have also studied the proposed model over a multi-path communication scenario, where it again outperforms a legacy solution based on ARQ.



%
%

\begin{acronym}

  \acro{TSNC}{Tunable Sparse Network Coding}
  \acro{SNC}{Sparse Network Coding}
  \acro{NC}{Network Coding}
  \acro{RLNC}{Random Linear Network Coding}
  \acro{FER}{Frame Error Rate}
  \acro{RLNC}{Random Linear Network Coding}
  \acro{ITU}{International Telecommunication Union}
  \acro{FEC}{Forward Error Correction}

  \acro{RTT}{Round Trip Time}
  \acro{RTO}{Retransmission TimeOut}
  \acro{WMN}{Wireless Mesh Network}
  \acro{WMNs}{Wireless Mesh Networks}
  \acro{MSS}{Maximum Segment Size}
  \acro{MAC}{Medium Access Control}
  \acro{SDU}{Service Data Unit}
  \acro{PDU}{Protocol Data Unit}
  \acro{DIFS}{Distributed Inter-Frame Space}
  \acro{MTU}{Maximum Transfer Unit}
  \acro{ISP}{Internet Service Provider}
  \acro{RLC}{Random Linear Coding}
  \acro{HMP}{Hidden Markov Process}

  \acro{PER}{Packet Error Rate}

  \acro{PLCP}{Physical Layer Convergence Protocol}

  \acro{ARQ}{Automatic Repeat Request}
  \acro{FEC}{Forward Error Correction}

  \acro{RLSC}{Random Linear Source Coding}

  \acro{MORE}{MAC-independent Opportunistic Routing \& Encoding}
  \acro{CATWOMAN}{Coding Applied To Wireless On Mobile Ad-Hoc Networks}
  \acro{CTCP}{Network Coded TCP}
  
  \acro{cdf}{cumulative distribution function}
  \acro{NACK}{Negative ACKnowledgement}
  \acro{NORM}{NACK-oriented Reliable Multicast}
  \acro{IETF}{Internet Engineering Task Force}
  \acro{COTS}{Commercial of-the-shelf}
  
  \acro{KPI}{Key Performance Indicator}
\end{acronym}


	\section*{Acknowledgements}
	
	This work has been supported by the Spanish Government (Ministerio de Econom\'ia y Competitividad, Fondo Europeo de Desarrollo Regional, FEDER) by means of the project \emph{ADVICE} (TEC2015-71329-C2-1-R). Douglas Leith was supported by Science Foundation Ireland under Grant No. 11/PI/1177 and 13/RC/2077.
	
	\bibliographystyle{IEEEtran}
	\bibliography{LaTeX/biblio}
	
\end{document}